\documentclass[nofootinbib, prl, reprint, preprintnumbers, superscriptaddress]{revtex4-2}

\usepackage{relsize}

\usepackage[acronyms, nohypertypes={acronym}, nopostdot, style=super, nonumberlist, toc]{glossaries}
\glsenableentrycount
\setacronymstyle{long-sc-short}

\newcommand\ac[1]{\gls{#1}}

\def\acresetall{\glsresetall[acronym]}

\newacronym{WF}{wf}{Wilson-Fisher}
\newacronym{AF}{af}{asymptotically free}
\newacronym{RG}{rg}{renormalization group}
\newacronym{WZW}{wzw}{Wess-Zumino-Witten}
\newacronym[longplural={conformal field theories}]{CFT}{cft}{conformal field theory}
\newacronym[longplural={lattice field theories}]{LFT}{lft}{lattice field theory}
\newacronym[longplural={effective field theories}]{EFT}{eft}{effective field theory}
\newacronym[longplural={quantum field theories}]{QFT}{qft}{quantum field theory}
\newacronym{LEC}{lec}{low-energy constant}
\newacronym{QCD}{qcd}{quantum chromodynamics}
\newacronym{MC}{mc}{Monte Carlo}
\newacronym{IR}{ir}{infrared}
\newacronym{UV}{uv}{ultraviolet}
\newacronym{SNR}{snr}{signal-to-noise ratio}
\newacronym{NLSM}{nl$\sigma$m}{nonlinear sigma model}
\newacronym{PCM}{pcm}{principal chiral model}
\newacronym{CSA}{csa}{Cartan subalgebra}
\newacronym{SSB}{ssb}{spontaneous symmetry breaking}
\newacronym{DOF}{dof}{degrees of freedom}
\newacronym{DMRG}{dmrg}{densiy matrix renormalization group}
\newacronym{YM}{ym}{Yang-Mills}
\newacronym{QLM}{qlm}{quantum link model}
\newacronym{KG}{kg}{Kogut-Susskind}
\newacronym{KG-QLM}{kg-qlm}{Kogut-Susskind quantum link model}
\newacronym{D-QLM}{d-qlm}{D-theory quantum link model}
\newacronym{SPT}{spt}{symmetry protected topological} 
\newacronym{GW}{gw}{Ginsparg-Wilson}
\newacronym{LGW}{lgw}{L\"uscher-Ginsparg-Wilson}
\newacronym{FK}{fk}{Fidkowski-Kitaev}
\newacronym{CS}{cs}{Chern-Simons}
\newacronym{APS}{aps}{Atiyah-Patodi-Singer}
\newacronym{PV}{pv}{Pauli-Villars}
\newacronym{PBC}{pbc}{periodic boundary conditions}
\newacronym{OBC}{obc}{open boundary conditions}
\newacronym{ABC}{abc}{antiperiodic boundary conditions}
\newacronym{KD}{kd}{Kahler-Dirac}
\newacronym{RKD}{rkd}{reduced Kahler-Dirac}
\newacronym{SMG}{smg}{symmetric mass generation}
\newacronym{BZ}{bz}{Brilloiun zone}
\newacronym{QSP}{qsp}{quantum signal processing}

\newacronym{CHN}{chn}{Creutz-Horvath-Neuberger}
\newacronym{GEP}{gep}{generalized eigenvalue problem}

\usepackage{latexsym}
\usepackage{mathtools}
\usepackage{amsmath,amssymb,amsbsy,amsfonts}
\usepackage{microtype}

\usepackage{graphics}
\graphicspath{{./fig/}}

\usepackage[normalem]{ulem}

\usepackage{bbold}

\usepackage{physics}
\usepackage{physics2}
\usephysicsmodule{ab,ab.braket}
\usepackage{tikz}
\usetikzlibrary{3d}

\newcommand\Order{O}

\newcommand\del\partial

\newcommand\tsum{\textstyle\sum}

\newcommand\Id{\mathbb{I}}

\usepackage{xcolor}
\usepackage{colortbl}
\usepackage{tabularx}
\usepackage{booktabs}

\usepackage{makecell}

\newcommand\minisec[1]{\paragraph{{#1}---}}

\usepackage{enumitem}

\usepackage{graphicx}
\usepackage{amsfonts}
\usepackage{tocvsec2}
\usepackage{slashed}
\usepackage{cancel}
\usepackage{comment}

\usepackage{minibox}
\usepackage{orcidlink}

\newcommand\beq{\begin{eqnarray}}
\newcommand\eeq{\end{eqnarray}}

\newcommand\Qa{{Q}_5}
\newcommand\gmod{\hat{\gamma}_5}
\newcommand\gfive{\gamma_5}

\newcommand\SU{\mathrm{SU}}

\usepackage{xparse}

\newcommand\hw{h^{\text{w}}}
\newcommand\hov{h^{\text{ov}}}

\newcommand\Ls{L_5}

\newcommand\hdw{h^\text{dw}}

\usepackage{ninecolors}
\usepackage{hyperref}
\hypersetup{
  unicode,
  pdfprintscaling=None,
  breaklinks,
  colorlinks=true,   %
  linkcolor=azure5,    %
  citecolor=azure5,    %
  filecolor=magenta, %
  urlcolor=blue5,      %
}
\usepackage[capitalise]{cleveref}
\usepackage{booktabs}
\usepackage{multirow}
\crefname{section}{Sec.}{Secs.}

\newcommand\psibf{\pmb{\psi}}

\usetikzlibrary{quantikz2}

\usepackage{amsthm}
\newtheorem{theorem}{Theorem}

\newcommand\PREP{\textsc{prepare}}
\newcommand\SEL{\textsc{select}}

\newcommand\calQ{\mathcal{Q}}
\newcommand\HG{\mathcal{H}_G}
\newcommand\HF{\mathcal{H}_F}

\newcommand\hwhat{\hat{\mathbf{h}}^{\text{w}}}
\newcommand\hwop{\hat{h}^{\text{w}}}
\newcommand\hop{\hat{h}}
\newcommand\hopG{\hat{h}_G}
\newcommand\hwopG{\hat{h}_G}

\newcommand\errSign{\epsilon_e}
\newcommand\errTime{\epsilon_t}

\newcommand\Ub[1]{{U_{#1}}}
\newcommand\PMUb[1]{\mathcal{P}_M(\Ub{#1})}
\newcommand\ehat[1]{\ensuremath{\hat{e}_{#1}}}

\begin{document}

\preprint{FERMILAB-PUB-26-0223-T}

\title{Exact chiral symmetry with quantum signal processing }

\author{Henry Lamm\,\orcidlink{0000-0003-3033-0791}}
\email{hlamm@fnal.gov}
\affiliation{Fermi National Accelerator Laboratory, Batavia, Illinois 60510, USA}
\author{Alessandro Roggero}
\email{a.roggero@unitn.it}
\affiliation{Physics Department, University of Trento, Via Sommarive 14, I-38123 Trento, Italy}

\author{Hersh Singh\,\orcidlink{0000-0002-2002-6959}}
\email{hershs@fnal.gov}
\affiliation{Fermi National Accelerator Laboratory, Batavia, Illinois 60510, USA}

\author{Luca Spagnoli}
\email{luca.spagnoli@unitn.it}
\affiliation{Physics Department, University of Trento, Via Sommarive 14, I-38123 Trento, Italy}

\begin{abstract}
  We give a quantum signal processing (\textsc{qsp}) algorithm for the overlap fermion Hamiltonian which preserves the Ginsparg-Wilson relation up to a controllable error $\epsilon_e$.
  Quantum simulations of Dirac fermions with exact chiral symmetry are thus nearly free: applying the overlap Hamiltonian costs only a factor logarithmic in $\epsilon_e$ more than the Wilson-Dirac Hamiltonian. Comparing to domain-wall fermions, a mild overhead is found in circuit complexity while reducing qubit costs. We show how \textsc{qsp} effectively constructs an extra dimension when simulating the overlap operator, illustrating that the scaling of quantum algorithms reflects the deeper physics of overlap fermions arising at the boundary of domain-wall fermions.
\end{abstract}

\maketitle
\acresetall

\section{Introduction}
Quantum simulation is the next frontier for systematically computing nonperturbative, nonequilibrium observables from lattice \ac{QCD}.
This raises the question of how to best deal with fermion-doubling
\cite{nielsen_absence_1981a,nielsen_absence_1981,karsten_lattice_1981}
and the problem of chiral symmetry in the Hamiltonian formulation.
In the conventional Euclidean spacetime approach, this problem was elegantly solved by the \ac{GW} relation \cite{ginsparg_remnant_1982} and the overlap operator \cite{neuberger_vectorlike_1998,neuberger_more_1998}, and the domain-wall operator \cite{kaplan_method_1992,kaplan_chiral_1993}.
While domain-wall fermions carry over simply to a Hamiltonian setting,
there is no canonical Hamiltonian analogue of overlap fermions
\cite{singh_ginspargwilson_2025,gioia_exact_2025,clancy_ginspargwilson_2024,chatterjee_quantized_2024,yamaoka_quantized_2025,haegeman_interacting_2024,
cheluvaraja_axial_2001,creutz_new_2002}. 

Here we take a different perspective and ask --- what are the most efficient quantum algorithms for
the different formulations of lattice fermions with chiral symmetry?
We take as case study a standard domain-wall fermion and the overlap Hamiltonian proposed by Refs.~\cite{cheluvaraja_axial_2001,creutz_new_2002}. In particular, we will compare the memory and gate complexity of applying the time-evolution generated by the domain-wall and overlap Hamiltonians. The complexity of domain-wall fermions is computed using the recent near-optimal scheme for local Hamiltonians from Ref.~\cite{Rhodes:2024zbr}, while we mostly focus on the implementation and optimization of overlap fermions. The most-used algorithms to perform time-evolutions are: Suzuki--Trotter product formula~\cite{Trotter:1959ytf, Suzuki:1976be} and their higher-order generalizations~\cite{Suzuki:1985, PhysRevX.11.011020}, and qubitization algorithms like Quantum Signal Processing \ac{QSP}~\cite{low2017hamiltoniansimulationuniformspectral, Low_2019,Gilyen:2018khw}. Despite \ac{QSP} being asymptotically optimal, Trotterization remains widely used due to its low memory overhead and competitive practical performance, especially when the Hamiltonian is local. In this work, we will focus on the scaling of circuits, and for this reason we will consider a \ac{QSP} implementation of the time-evolution operator~\cite{low2017hamiltoniansimulationuniformspectral, Berry:2014ivo}. This requires in general deeper circuits than trotterization algorithms, ancilla qubits, and long-range qubit connectivity for lattice field theory~\cite{Rajput:2021khs,Kane:2023jdo,Hariprakash:2023tla,Spagnoli:2024mib,Kiss:2024sep,Rhodes:2024zbr,Kane:2024odt,Spagnoli:2025xvk}.  Despite this, for lattice fermions with chiral symmetry, the sign function $\varepsilon(h_W)$ appearing in the overlap operator is highly nonlocal and must itself be approximated~\cite{singh_ginspargwilson_2025}; 
the classical polynomial-based approaches to this~\cite{borici_computational_2004, brower_mobius_2006} have direct quantum analogues through \ac{QSP}~\cite{Gilyen:2018khw, low2017hamiltoniansimulationuniformspectral}.

We first set up our lattice and notation. After describing the Hamiltonian formulation of the domain-wall and overlap operator, we present a \ac{QSP} based approach to simulating domain-wall and overlap fermions.
We find that the chiral symmetry by the \ac{QSP} algorithm is broken softly, which makes overlap fermions comparable to domain-wall fermions.
The scaling of the algorithms provides a quantum-algorithmic realization of the known
correspondence between the overlap operator and the extra dimension of domain-wall
fermions~\cite{narayanan_infinitely_1993,neuberger_vectorlike_1998,neuberger_more_1998,brower_mobius_2006}.
Our implementation of overlap fermions results in a better asymptotic scaling in the memory cost compared to domain-wall fermions, while having an overhead in the gate count. Additional details of the model, such as the structure and encoding of the gauge fields,
determine the constant-factor overheads for any specific target theory but do not alter
these scalings.

\section{Lattice Hamiltonians with Chiral Symmetry}

Let $\psi_x, {\psi}^{\dagger}_x$ be a $2^{\frac{d+1}{2}}$-component Dirac fermion field in $d$ (odd) spatial dimensions at site $x$ of a square lattice of size $N$ in the spatial dimensions.
Additional quantum numbers (like color or flavor) can be introduced via the associated vector space of dimension $\mathcal{N}_I$ and promoting the fermion fields $\psi_x, {\psi}^{\dagger}_x$ to $\mathcal{N}_I2^{\frac{d+1}{2}}$-components.
We denote the link connecting sites $x$ and $x + \hat e_i$ as $(x, i)$, where $\hat e_i$ is a unit vector in the spatial direction $i$.
We use bold $\pmb \psi, \pmb {\psi}^{\dagger}$ to denote the $\mathcal{Q}$-component vector with all indices suppressed where $\mathcal{Q}=N^d \mathcal{N}_I2^{\frac{d+1}{2}}$.
For a gauge group $G$, each link $(x,i)$ carries a Hilbert space $\HG^{(x,i)}$ spanned by the group-element basis $\{\ket|g> : g \in G\}$. On it acts the link operator $U_{(x,i)}$, the parallel transporter from site $x$ to $x+\hat e_i$. In this work, we focus on the fermionic side and thus eave the qubit encoding of link space unspecified, treating \(U_{(x,i)}\) as an abstract operator throughout.
Under a gauge transformation parametrized by the group element $g_{x}$ at site $x$, the fields transform as
\begin{align}
  \psi_{x} &\to g_{x} \psi_x, \quad 
  U_{(x,\hat{i})} \to g_{x+\ehat{i}} U_{( x, i)} {g}^{\dagger}_{x} 
\end{align}
which ensures that ${\psi}^{\dagger}_{x+\ehat{i}}U_{(x,i)} \psi_x$ is invariant.
We denote the Fock space of fermions as $\HF$ with $\dim \HF = 2^\calQ$, and the Hilbert space of all links as $\HG = \bigotimes_{(x,i)} \HG^{(x,i)}$.

Gauge invariance means that the physical Hilbert space is invariant under the above transformations.
Let $D_i, D^*_i$ be the forward and backward covariant derivatives along the direction $i$, which act as
\begin{equation}
\begin{split}
	(D_i\psibf)_{x} &= {U}^{\dagger}_{(x,i)} \psi_{x+\hat{i}} - \psi_{x}\;,\\
    (D^*_i\psibf)_{x} &= \psi_{x} - U_{(x-\hat{i},i)} \psi_{x-\hat{i}}\;.
\end{split}
\end{equation}
Now we define the symmetric-difference operator $\delta_i$ and the lattice Laplacian $\Delta =  \sum_{i=1}^d \Delta_i$ where
\begin{align}
	\delta_i = \frac{1}{2} (D_i + D^*_i), \quad \Delta_i = D_i - D^*_i\;.
\end{align}
Moreover, we write a second-quantized quadratic fermionic Hamiltonian as
\begin{align}
  H = {\psibf}^{\dagger} h \psibf\;,
\end{align}
where $h$ is called the single-particle Hamiltonian.
We use the notation $H^\text{f} = {\psibf}^{\dagger} h^\text{f} \psibf$, where $\text{f}=\text{w}, \text{dw}, \text{ov}$ denotes Wilson-Dirac, Domain-wall or overlap fermions.\footnote{
  We summarize the notation for the fermionic Hamiltonian objects used in this work. 
Lowercase \(h\) denotes single-particle Hamiltonians, while uppercase \(H\) denotes the quadratic second-quantized ones. 
The operator \(\hop = \sum_{ab} h_{ab}\ketbra|a><b|\) represents \(h\) acting on a \(\mathcal{Q}\)-dimensional qudit register without gauge fields, while 
\(\hopG = \sum_{ab} h_{ab}\ketbra|a><b|\otimes U_{ab}\) denotes its gauge-coupled counterpart, with \(U_{ab}\) acting on the link Hilbert space \(\mathcal{H}_G\).
}

The Wilson-Dirac Hamiltonian is given by taking $h = \hw$ with%
\footnote{\emph{Notation:} We use the mostly-minus metric convention, with \(\gamma^0\) Hermitian and the spatial \(\gamma^i\) anti-Hermitian for \(i\neq 0\). 
Defining \(\Gamma^0=\gamma^0\) and \(\Gamma^i=\gamma^0\gamma^i\), all \(\Gamma^\mu\) are Hermitian, mutually anticommuting, and satisfy \((\Gamma^\mu)^2=1\). 
The chirality operator is defined as $\gamma_5=i^{d(d+1)/2}\Gamma^0\Gamma^1\cdots\Gamma^d$,
with \(\gamma_5^\dagger=\gamma_5\) and \(\gamma_5^2=1\).}
\begin{align}
\hw = 
    \tsum_{k=1}^{d} i \Gamma^k \delta_k + \Gamma^0 (m - r \Delta / 2).
    \label{eq:hw_definition}
\end{align}
where $m$ is the bare mass and $r$ is the Wilson parameter.
Even for massless Dirac fermions, Wilson fermions break chiral symmetry at the lattice level, restored only in the continuum limit.
Domain-wall fermions and overlap fermions both remedy this problem.
Domain-wall fermions, $h=\hdw$ are obtained by adding an extra-dimension with \ac{PBC} along the $i=1, \dots, d$ dimension and \ac{OBC} along the $(d+1)th$ bulk dimension.%
\footnote{We use \ac{PBC} throughout and comment on others at the end.}
The domain-wall Hamiltonian is the Wilson-Dirac single-particle Hamiltonian in $d+1$ dimensions, where the extra $(d{+}1)$th direction carries the Dirac structure $\Gamma^{d+1} = i\,\Gamma^0 \gamma_5$, i.e.\ the lower-dimensional chirality $\gamma_5$ acts as the gamma matrix along the extra dimension. The gauge links along the extra dimension are taken to be $U_{(x,d+1)} = 1$ and the links parallel to the boundary are merely copies of boundary links $U_{x, i} = U_{x + \ehat{d+1}, i}$.
For $0 < m < 2r$, the free theory hosts localized Weyl fermions at both the boundaries, which combine to yield a single light Dirac fermion.

On the other hand, overlap fermions are formulated in $d$ spatial dimensions, trading the extra dimension for all-to-all interactions.
A Hamiltonian formulation of overlap fermions was suggested by \cite{creutz_new_2002}, with the single-particle Hamiltonian
\begin{align}
\hov &= \Gamma^0 + \varepsilon(\hw) \label{eq:hov}
\end{align}
with $\varepsilon(X) = X / \sqrt{{X}^{\dagger} X}$ the sign function.
Choosing $0 < m < 2r $ results in a single massless Dirac fermion, with an exact chiral symmetry on the lattice given by $\Qa = \pmb {\psi}^{\dagger} \gmod \pmb \psi $, where $\gmod$ is a ``modified'' $\gamma_5$ operator
\begin{align}
	\gmod = \frac{1}{2} \gamma_5(1 - \Gamma^0 \varepsilon(\hw)).
\end{align}

\section{Algorithms for quantum simulation}

We represent the fermionic degrees of freedom by encoding the $\calQ$ spinless fermions with a register of qubits of size $\calQ$ using the Jordan-Wigner mapping.
We denote as $\Psi_a$ and $\Psi^\dagger_a$, for $a=0,\dotsc,\calQ-1$, the individual spinless fermionic operators forming $\pmb \psi$ and $\pmb {\psi}^{\dagger}$.
With this notation, a quadratic fermionic Hamiltonian can be written as
\begin{equation}
\label{eq:snd_comp}
H = \sum_{a,b=0}^{\calQ-1} h_{ab} U_{ab} \Psi^\dagger_a \Psi_b \;,
\end{equation}
where $h_{ab}$ is a $\calQ\times\calQ$ matrix of numbers while $U_{ab}$ are operators acting on the Hilbert space $\mathcal{H}_G$ of the link variables.%
\footnote{For nearest-neighbor couplings, $U_{ab}$ is simply the link field operator connecting the sites. For couplings beyond nearest-neighbor, $U_{ab}$ involves sums of products of link operators on a path connecting the two sites (Wilson lines).}
The combined indices $a,b$ unwrap into the color $n$, spin $\alpha$, spatial indices $\vec x$ as $a \equiv (n, \alpha, \vec x)$, where $\vec x = (x_1, \dotsc, x_d)$ are the coordinates of a lattice site.

\minisec{Block encoding the second-quantized fermionic Hamiltonian}
$H$ in \cref{eq:snd_comp} acts on the full vector space $\HF \otimes \HG$.
We desire an efficient algorithm to block encode $H^{\text{ov}}$, and use it to time evolve via \ac{QSP}.

A block encoding implements a nonunitary operator $H$ by embedding it as the top-left block of a larger unitary operator $W$ acting on an extended Hilbert space with $b$ ancilla qubits, $\mathbb{C}^{2^b}\otimes\mathcal{H}_F\otimes\mathcal{H}_G$. Concretely, if we set the ancilla to $\ket| 0^{\otimes b} > $, apply $W$, and project back onto $\ket| 0^{\otimes b} >$, then the effective operation on the original system is
\begin{equation}
\label{eq:benc_fham}
\langle 0^{\otimes b}\lvert W\rvert 0^{\otimes b}\rangle \approx \frac{H}{\Lambda}\;,
\end{equation}
with a normalization $\Lambda$, and the approximation holds up to a fixed maximum error in operator norm.

In order to efficently block encode a generic quadratic fermionic Hamiltonian, we start from a single-particle Hamiltonian $h$.
Key to this is the \textit{promotion} of the single-particle Hamiltonian $h$: rather than treating it as a $\calQ\times \calQ$ matrix, we promote it to an operator acting on $q \geq \log_2(\calQ) $ qubits.  This exponentially compresses the Hilbert space with respect to $h$.
The fermionic Fock register itself still requires $\calQ$ qubits, and the controlled
fermionic operations still cost $O(\calQ)$ gates; the compression applies only
to the single-particle structure. We define
\begin{align}
\label{eq:hbf_def}
\hwopG &= 
         \sum_{a,b=0}^{\calQ-1} h_{ab}\rvert a\rangle\langle b\lvert\otimes U_{ab}\;,
\end{align}
where $\ket|a>$ of the auxiliary register tracks the combined indices $a \equiv (n, \alpha, \vec x)$ 
\begin{align}
\ket|a> = \ket|n> \ket|\alpha> \ket| x_1 > \cdots \ket| x_d >
    \label{eq:aux_reg_convention}
\end{align}

Then, we can block encode the rescaled $H/2^q$ with $q\geq\log_2(\calQ)$, using a block-encoding of $h$ together with additional $O(\calQ)$ gates and $O(\log(\calQ))$ auxiliary qubits (the explicit expression is found in the supplementary material). Next, we discuss how to block encode $h$.

Let $A$ be the unitary incrementer on an $N$-dimensional register 
$A = \sum_{j=0}^{N-1}\ketbra| j \oplus 1 > < j |$
where $\oplus$ denotes addition modulo $N$.
We will call $A_i$ the incrementer applied on the spatial component of $\ket|a>$ in direction $i=1,\dots,d$
\begin{align}
	A_i \ket|n, \alpha, x_1, \dots, x_d >
  &= \ket|n, \alpha, x_1, \dots, x_i\oplus 1, \dots, x_d > 
\end{align}
With this, the Wilson-Dirac Hamiltonian operator without gauge fields acting on $q \geq \log_2(\calQ) $ qubits is:
\begin{align}
  \hwop &= 
          \frac{i}{2} \sum_{i=1}^{d} \Gamma^i \left( A_i - A_i^\dagger \right)
  \notag\\&\quad
  - \frac{r}{2}\Gamma^0 \sum_{i=1}^{d} \left( A_i + A_i^\dagger \right) + (m+dr) \Gamma^0 \mathbb{1}\;,
\end{align}
where the $\Gamma$ matrices are unitary operators acting on the spin register.
This is now a linear combination of $4d+1$ unitary operators, and can be block encoded in a standard manner~\cite{childs2012hamiltonian}.
To do so, we introduce $l = \lceil \log_{2} (4d+1) \rceil$, and define a \PREP{} operator $P$ which creates a superposition state
\begin{equation}
\begin{split}
\label{eq:prepare_state}
    &P \lvert 0^{\otimes l} \rangle \lvert 0 \rangle_{s} = \\
    &= \frac{1}{\sqrt{\alpha}}\left( \sum_{k=0}^{2d-1} \frac{1}{\sqrt{2}} \lvert k \rangle + \sum_{k=2d}^{4d-1} \sqrt{\frac{r}{2}} \lvert k \rangle + \sqrt{m+dr} \lvert 4d \rangle\right)\lvert1\rangle_{s} \;, \\
\end{split}
\end{equation}
where we introduced an additional qubit $\lvert\cdot\rangle_{s}$, that will be used to put the $i$ and $-1$ coefficients where needed; below we abbreviate $\lvert 0^{\otimes l}\rangle\lvert 0\rangle_s$ as $\lvert 0^{\otimes l}\rangle$. Then, we define the \SEL{} operator $S$ as
\begin{equation}
    S = \gamma^0 \sum_{k=0}^{2^l-1} \ketbra|k><k| \otimes U_k
\end{equation}
with, for $i = 0, \dots, d-1$,%
\begin{equation}
\label{eq:Ufirstrow}
U_{i} = \gamma^{i+1}A_{i+1}e^{-i\frac{\pi}{2} Z_s}, \quad\
U_{d+i} =\gamma^{i+1} {A}^{\dagger}_{i+1}  e^{i\frac{\pi}{2} Z_s} \;,
\end{equation}
\begin{equation}
\label{eq:Usecondrow}
U_{2d+i} = {A}_{i+1} Z_s , \quad
U_{3d+i} = {A}^{\dagger}_{i+1} Z_s, \quad U_{4d} = \Id,
\end{equation}
and $U_k = \Id$ for $k > 4d$, so that $S$ is unitary.
The operator $S$ acts with the projectors $\ketbra|k><k|$ on the $l$-qubit state prepared by $P$, with the $Z$ rotations of $\mp\pi/2$ applied on the additional qubit $s$, and with $A_i$ and $A_i^\dagger$ on the $q$-qubit auxiliary register used to promote the single particle Hamiltonian to a $q$-qubit operator. Since $P$ leaves the qubit $s$ in $\lvert1\rangle_{s}$, the rotations $e^{\mp i\frac{\pi}{2} Z_s}$ contribute $\pm i$ and $Z_s$ contributes $-1$, reproducing the coefficients of $\hwop$.
Then, the block-encoding of the Wilson Hamiltonian is simply $\Ub{\hwop} = P^\dagger S {P}$, meaning 
\begin{align}
  \braket< 0^{\otimes l} | \Ub{\hwop} | 0^{\otimes l}  >  &= 
  \braket< 0^{\otimes l} | P^\dagger S P | 0^{\otimes l}  >  =  \frac{\hwop}{\alpha} 
\end{align}
where $\alpha = d+2dr+m$ is the $1$-norm of the Hamiltonian
$\hwop$.
The operator $S$ is a sum of the $(4d+1)$ unitaries defined in \cref{eq:Ufirstrow} and \cref{eq:Usecondrow}. The dominant cost for them is coming from the incrementers $A_{i}$, each one of which can be implemented using $O(\log N)$ gates and $O(\log N )$ ancillae following the construction in Ref.~\cite{Gidney_2018}. Adding all of the costs together, the prepare unitary $P$ can be implemented using a general state preparation on $l=O(\log d)$ qubits which can be performed in $O(d)$ gates at most (see e.g.~\cite{Plesch_2011}). For $S$ we instead require $O(d\log N)$ gates and $O(d\log N)$ ancilla qubits, in addition to the $l$ qubits needed to define the state in \cref{eq:prepare_state}.

To add gauge fields, we assume a block encoding $G^{(x,i)}_U$ of the link fields $U_{(x,i)}$ such that
\begin{align}
  \braket< 0^{\otimes b_G} | G_{U}^{(x,i)} | 0^{\otimes b_G} > 
  &= \frac{U_{(x,i)}}{\Lambda_G},
    \label{eq:gu_benc}
\end{align}
for some normalization constant $\Lambda_{G}$ and a fixed register of $b_G$ qubits.
Then, we can slightly modify the \SEL{} operator $S$ into $S_G$, by defining
\begin{align}
  \Sigma_i = \sum_{x} \ketbra| x > < x |  G^{(x,i)}_U, 
\end{align}
and substituting
$A_i \to A_i \Sigma_i,\ A_i^\dagger \to {\Sigma}^{\dagger}_i A_i^\dagger$, which gives
\begin{align}
  \braket< 0^{\otimes (l+b_G)} | {P}^{\dagger} S_G P | 0^{\otimes (l+b_G)}> =
  \frac{\hwopG}{\alpha \Lambda_{G}}\;.
\label{eq:block_encoding_hw}
\end{align}
The total cost will be the one of before, plus the cost of implementing $G^{(x,i)}_U$ on every link of the lattice. Assuming the cost of $G^{(x,i)}_U$ to be $O(1)$ at a fixed truncation of the gauge field, the cost will scale with the number of links, and thus with the volume. By using unary iteration~\cite{PhysRevX.8.041015}, the gate count will scale linearly with the number of $G^{(x,i)}_U$, while the number of ancilla qubits needed will scale with its logarithm. Thus, the total cost is then given by $O(\calQ)$ gates and $O(\log\calQ)$ ancilla qubits.

\minisec{Block encoding the overlap Hamiltonian}
The overlap Hamiltonian requires an implementation of the sign function $\varepsilon(\hw)$ of the single-particle Wilson Hamiltonian $\hw$.
Let $p_M(x)$ be an order-$M$ polynomial approximation to the sign function
such that for a given $\errSign$, we have
\begin{align}
	\ab| \varepsilon(x) - p_M(x) |  \leq \errSign, \quad \forall x \in [-1, -\kappa / 2] \cup [\kappa / 2, 1 ].
    \label{eq:poly_approx_to_sign}
\end{align}
Such a polynomial approximation can be found as long as the order $M$ is chosen large enough (Lemma 10 and Corollary 6 of Ref.~\cite{low2017hamiltoniansimulationuniformspectral})
\begin{align}
	M = \Order\ab(\kappa^{-1} \log \errSign^{-1}),
    \label{eq:poly_order_scaling}
\end{align}
where the parameter $\kappa$ must be chosen as a lower bound on the absolute value of eigenvalues of the block-encoded Wilson Hamiltonian (with gauge fields).
If $\lambda_h$ is a lower bound to the absolute value of eigenvalues of $\hwop$,
the absolute-value of eigenvalues of the block-encoded Hamiltonian are lower-bounded by
$\lambda_{min}=\lambda_h / \alpha \Lambda_G$ and we can therefore choose $\kappa=\lambda_{min}$. 
Note that the free Wilson Hamiltonian $\hwop$ (without gauge fields) is gapped with the smallest absolute eigenvalue $m$. %
However, with dynamical gauge fields the Hamiltonian can develop small eigenvalues. In fact, this is a problem in classical lattice \ac{QCD} calculations, where techniques such as deflation are used to deal with near zero eigenvalues~\cite{Luscher:2007se,Luscher:2007es}.
In this work, we fix $\kappa >0$.

Let $\Ub{\hwop} = {P}^{\dagger} S_G P$ be the block-encoding of the Hamiltonian $\hwop$,
and let $\PMUb{\hwop}$ be the block encoding of the polynomial of order $M$ in the Hamiltonian.
\ac{QSP}, in its quantum singular value transformation formulation, allows a degree-$M$ polynomial transformation of a block-encoded operator to be applied using $M$ alternating applications of $U_{\hat h}$, $U_{\hat h}^\dagger$ and single-qubit rotations on one ancilla.
We take $p_M$ to be the odd polynomial approximation in \cref{eq:poly_approx_to_sign} and can then define
\begin{align}
  E_M = \braket< 0^{\otimes(l+b_G)} | \mathcal{P}_M\ab( U_{\hwop} ) | 0^{\otimes (l+b_G)} >,
\end{align}
as the polynomial approximation to $\varepsilon(\hwopG)$ of order $M$.
Therefore, \cref{eq:poly_approx_to_sign} implies
\begin{align}
  \ab\| \varepsilon(\hw) - E_M(\hw) \| \leq \errSign
\label{eq:BE_polynomial_approx_sign}
\end{align}
with $M$ scaling as given in \cref{eq:poly_order_scaling}.
Moreover,
$\PMUb{\hwop}$ can be implemented with $M$ applications of $\Ub{\hwop}$ and its dagger, plus $O(M\log\calQ)$ gates (Theorem 17 of Ref.~\cite{Gilyen:2018khw}). This means that the cost of implementing the approximate reflection $\varepsilon(\hwop)$ to error less than $\errSign$ scales as
\begin{equation}
    O\ab( \frac{\calQ}{\kappa} \log\frac{1}{\errSign} )\;.
    \label{eq:cost_E_M}
\end{equation}
We note that a direct block encoding of the Wilson Hamiltonian as in \cref{eq:block_encoding_hw}, with an explicitly-broken chiral symmetry, would also have required $O\left( \calQ \right)$ gates but without approximation error or dependence on $\kappa$. We also note that, in general, this scaling with $\kappa$ and the error $\errSign$ is likely to be optimal since an approximation to the sign function can be used as a pre-processing step when solving a linear system by amplifying the spectral gap~\cite{doi:10.1137/120871997,low2017hamiltoniansimulationuniformspectral,x6gh-d8gh}. This expectation follows from the fact that $\kappa^{-1}$ is the condition number of the matrix and there is a known lower bound on the oracle calls to the matrix which scales linearly with the condition number~\cite{PRXQuantum.3.040303,10.1088/2058-9565/ae89e0}.

Putting everything together, we have the single-particle Hamiltonian that can be implemented with $O(\frac{\calQ}{\kappa}\log\frac{1}{\errSign})$ gates and $O(\log(\calQ))$ ancillary qubits. The quadratic fermionic Hamiltonian can be block-encoded as in \cref{eq:benc_fham} using $O(\calQ)$ gates and $O(\log(\calQ))$ ancillary qubits plus the cost of the block-encoding of the single-particle Hamiltonian. Thus, the total cost is
\begin{equation}
    C_H = O\left( \frac{\calQ}{\kappa}\log\frac{1}{\errSign} \right)\,.
\end{equation}
This is only a factor $\kappa^{-1}\log(1/\errSign)$ more than the Wilson case, while preserving chiral symmetry up to $O(\errSign)$.
Once the block-encoding of the Hamiltonian $H$ can be implemented with cost $C_H$, we can use the \ac{QSP} algorithm~\cite{Gilyen:2018khw} and implement a polynomial approximation to the exponential $e^{-iHt}$ with
\begin{align}
  O\ab(C_H \ab( \Lambda_H t + \log{\errTime^{-1}} ))
\end{align}
gates, where $\Lambda_H=O(\calQ)$ is an upper-bound to the $1$-norm of $H$, and $\errTime$ is the maximum allowed error. Considering the overlap Hamiltonian, this means that we can implement its corresponding time-propagator with a gate complexity of
\begin{equation}
    O\ab( \calQ \kappa^{-1} \log\ab( \errSign^{-1} ) \ab( \calQ t + \log{\errTime^{-1}} ))\;.
\end{equation}
In the last equation, $\errSign$ is the error coming from the polynomial approximation of the sign function, while $\errTime$ is the error coming from the polynomial approximation to the exponential function for time evolution.

The cost of Wilson-Dirac, domain-wall and overlap fermions in terms of memory, gate complexity to block-encode the Hamiltonian and the gate complexity to perform time evolution is given in \cref{tab:scaling}.

We note that when considering Wilson and domain-wall fermions, the geometric locality of the Hamiltonian implies a finite Lieb-Robinson velocity~\cite{Haah:2018ekc}, enabling simulation with almost-linear cost $O(\calQ t \text{polylog}(\calQ t / \errTime))$~\cite{Rhodes:2024zbr}.
For overlap fermions, the all-to-all structure of $\hov$ prevents exploiting locality.
This distinction further sharpens the tradeoff: domain-wall fermions are more expensive in qubits but cheaper in gate depth at larger system sizes.

\begin{table*}[tbp]
    \centering
    \renewcommand{\arraystretch}{1.3}
\begin{tabular}{@{}ll@{\hspace{2.2em}}|ccc@{}}
    \toprule
    \toprule
    & & \textbf{Overlap} & \textbf{Wilson} & \textbf{Domain Wall} \\
    \midrule

    \multicolumn{2}{@{}l|}{\textit{Qubit cost}}
    & $O(\calQ+\log\calQ)$
    & $O(\calQ+\log\calQ)$
    & $O(\calQ L_5+\log(\calQ L_5))$ \\

    \addlinespace[0.4em]
    \midrule
    \addlinespace[0.2em]

    \multirow{2}{*}{\makecell[l]{\textit{Circuit}\\\textit{complexity}}}
    & Block enc. of $h$
    & $O(\calQ \kappa^{-1} \log(1/\errSign))$
    & $O(\calQ)$
    & $O(\calQ L_5)$ \\

    & Time evolution
    & $O\ab(\calQ \kappa^{-1}\log\ab(1/\errSign)\ab[\calQ t+\log(1/\errTime)])$
    & $O\ab(\calQ t\,\mathrm{polylog}(\calQ t/\errTime))$
    & $O\ab(\calQ L_5 t\,\mathrm{polylog}(\calQ L_5 t/\errTime))$ \\

    \addlinespace[0.4em]
    \midrule
    \addlinespace[0.2em]

    \multicolumn{2}{@{}l|}{\textit{Chiral symmetry violation}}
    & $O(\epsilon_e)$
    & $O(1)$
    & $O(e^{-cL_5})$ \\

    \bottomrule
        \bottomrule
\end{tabular}
    \caption{Resource comparison for \ac{QSP}-based simulation of
      overlap, Wilson, and domain-wall fermions coupled to gauge fields.
      Here $\errSign$ is the sign-function approximation error,
      $\errTime$
      is the total simulation error, $t$ is the evolution
      time, and $L_5$ is the extent of the extra dimension.
      The identification
      $\epsilon_e \sim e^{-c L_5}$ connects the overlap and
      domain-wall formulations.
      Subleading polylogarithmic factors are suppressed. For Wilson and domain wall the strategies are adapted from Ref.~\cite{Rhodes:2024zbr}, where they are derived for staggered fermions.
      \label{tab:scaling}}
\end{table*}

\section{Exact chiral symmetry and the Ginsparg-Wilson relation}
Consider \cref{eq:BE_polynomial_approx_sign}. By truncating the polynomial to order $M$ we also have prevented the block encoded approximation to $\varepsilon(\hwhat)$ to be unitary.
As discussed above, $E_M$ is Hermitian because $\mathcal{P}_M$ is a real polynomial of definite parity applied to the Hermitian $\hwopG$, so the only error comes from approximating the sign function by a degree $M$ polynomial.
By using \cref{eq:BE_polynomial_approx_sign} together with the fact that $\varepsilon(\hwhat)$ squares to the identity, we can say that
\begin{equation}
\norm{ E_M^2 - \mathbb{1} } \leq 2 \errSign \;.
\end{equation}
The violation of the Ginsparg-Wilson relation is then of the order of $\epsilon_e$.
Let $V = \gamma^0 \varepsilon(\hw)$, and let $\hat{V} = \gamma^0 E_M$ be our approximation to it. For the Dirac operator $D = 1 + V$, the \ac{GW} relation $D + {D}^{\dagger} = {D}^{\dagger} D$ is exact if ${V}^{\dagger}V =1 $. However, with the approximation of the sign function, $\hat{V}$ ceases to be exactly unitary, but rather $\norm{{\hat{V}}^{\dagger} \hat{V} - \mathbf{1} } \leq 2 \epsilon_e  $.  This translates to an error in the \ac{GW} relation,
\begin{align}
\norm{\hat{D} + {\hat{D}}^{\dagger} - {\hat{D}}^{\dagger} \hat{D}} \leq 2 \epsilon_e \;.
\end{align}
The \ac{GW} relation guarantees a chiral symmetry:
\begin{align}
  \norm{ [\gmod, \hat{h} ]  } \leq 2\epsilon_e,
  \label{eq:GW_rel_violation}
\end{align}
where the $\gmod$ is the modified $\gfive$ operator, defined by
\begin{align}
  \gmod = \frac{1}{2} \gfive (1 - \gamma^0 E_M).
  \label{eq:modified_gamma_5}
\end{align}

This QSP-implementable operator is the
analogue of the standard overlap chiral operator $\gamma_5(1-\tfrac12 D)$ and commutes with the exact overlap Hamiltonian. \cref{eq:GW_rel_violation} is derived in the Supplementary Material.

Since the \ac{QSP} construction already provides a block encoding of $E_M$, the modified chiral operator can be implemented with a gate complexity given by \cref{eq:cost_E_M}, which is the same asymptotic cost for applying the single-particle overlap Hamiltonian.

\minisec{Overlap operator and the extra dimension}
The scaling in \cref{tab:scaling} suggests that we can identify $\epsilon_e \sim \exp( -c \Ls) $ for some $c$. 
Indeed, the fact that the domain-wall operator with finite extra dimension $\Ls$ is an approximation to the overlap operator has been known since the discovery of the overlap operator \cite{narayanan_infinitely_1993,neuberger_more_1998}. A Euclidean derivation of the overlap operator as the effective boundary operator of a domain-wall system shows that the domain-wall systems approximates the sign function $\varepsilon(\hw) \sim \tanh(\hw \Ls)$, which becomes exact as $\Ls \to \infty$~\cite{narayanan_infinitely_1993}.
We find this same behavior for the \ac{QSP} implementation of the overlap operator. The approximation of the sign function effectively generates an extra dimension of size $\Ls \sim \kappa^{-1}\log(\epsilon_e^{-1})$.
The same connection is reflected in the sparsity of the approximate overlap operator.
Starting from the $s$-sparse $\hw$ (with $s= \mathcal{O}(d)$ due to nearest-neighbor structure), 
the degree-$M$ polynomial approximation $p_M(\hw)$ is at most $sM$-sparse. Since $M = \mathcal{O}(\kappa^{-1}\log(1/\epsilon_e))$, this gives a sparsity of $\mathcal{O}(d\kappa^{-1} \log(1/\epsilon_e))$, interpolating between the local Wilson operator ($M=1$) 
and the fully nonlocal exact Overlap operator ($M \to \infty$). 
This mirrors the domain-wall picture: $L_5 \sim \kappa^{-1}\log(1/\epsilon_e)$ steps along the fifth dimension spread the support of the effective boundary operator by $L_5$ lattice spacings.

The polynomial approximation of the sign function in the \ac{QSP} construction is closely related to classical L\ac{QCD} approaches. In both cases, achieving an approximation error $\epsilon_e$ over the spectral interval $[\kappa,1]$ requires a degree scaling as $M = O(\kappa^{-1}\log(1/\epsilon_e))$. Classical overlap formulations, however, typically employ optimal rational approximations, such as the Zolotarev approximation~\cite{Chiu:2002eh,vandenEshof:2002ms}, or their M\"obius domain-wall realization~\cite{brower_mobius_2006}, which empirically have been shown to greatly improve the dependence on the spectral gap for the Dirac operator. While these rational approximations are more efficient classically, they have no direct analogue within the standard \ac{QSP} framework. Extending quantum signal processing to efficiently realize rational approximations would therefore be an interesting direction for future work.

We presented \ac{QSP}-based quantum simulation algorithms for overlap fermions and compared them to domain-wall implementations; both preserve an exact chiral symmetry.
For overlap fermions, approximating the sign function with a degree-$M$ polynomial yields gate complexity $O(\mathcal{Q}\kappa^{-1} \log(1/\epsilon_e))$ and violates the \ac{GW} relation only at $O(\epsilon_e)$. The central insight is that \ac{QSP} naturally generates structure analogous to the extra dimension in domain-wall fermions.
The polynomial degree $M \sim\kappa^{-1} \log(1/\epsilon_e)$ plays the same role as $L_5$: both control chiral symmetry violation, both introduce identical logarithmic cost, and both interpolate between local and fully nonlocal operators.
This equivalence reflects the deep connection between overlap fermions and the boundary theory of domain-wall fermions~\cite{narayanan_infinitely_1993,neuberger_more_1998}. The two formulations offer a direct tradeoff: domain-wall requires $\mathcal{Q} L_5$ qubits but retains geometric locality, while overlap requires only $\calQ$ qubits but incurs a worst-case gate count scaling quadratically with system size, $O(\calQ^2)$.
Which formulation is preferable will depend on hardware constraints. The block-encoding framework presented here already accommodates non-Abelian gauge fields through the link operator $G_U^{(x,i)}$ in \cref{eq:gu_benc}. For an $\SU(N_c)$ gauge theory with a truncated link Hilbert space of dimension $\Lambda$, the gauge block encoding requires $b_G = O(\log \Lambda)$ ancilla qubits, while $\Lambda_G$ --- which enters the cost only through $\kappa$ --- is independent of $\Lambda$. In the group-element basis of a finite group or digitized subgroup~\cite{Lamm:2019bik}, $U_{(x,i)} = \sum_g \ketbra|g><g| \otimes D(g)$ is exactly unitary, so $\Lambda_G = 1$ and $b_G = 0$; under a representation-basis truncation $U_{(x,i)}$ couples each irrep to at most $N_c$ others with $O(1)$ matrix elements, so a sparse-access block encoding~\cite{Gilyen:2018khw} gives $\Lambda_G = O(N_c)$. The asymptotic scaling is therefore unaffected. Future work includes obtaining explicit resource estimates for realistic $\SU(3)$ simulations, extending to theories with anomalous chiral symmetry relevant for \ac{QCD}, and benchmarking on near-term devices.

Anti-periodic boundary conditions can be accommodated by replacing $A_i \to \tilde{A}_i$ where $\tilde{A}_i$ differs from $A_i$ by a sign on the wrap-around term, requiring only a single additional multi-controlled-$Z$ gate per direction, which will not change the asymptotic scaling. Open boundary conditions can be implemented in a similar way, modifying the $A_i$ operators. However, this would make $A_i$ not unitary anymore, and this will in general translate in a finite success probability for its implementation, possibly modifying the overall scaling.
For the domain-wall construction, open (Dirichlet) boundary conditions along the extra dimension, $s = 0, \dotsc, L_5 - 1$, are desirable so chiral modes are localized on the two boundaries.

We have quoted costs at fixed $\kappa$; whether a deflation step analogous to the classical one can remove the $\kappa^{-1}$ prefactor on a quantum computer is left to future work. Further, our construction assumed Jordan--Wigner encoding, where the $\mathcal{P}_a$ strings have weight $O(\calQ)$ but the unary-iteration accumulator keeps the total cost at $O(\calQ)$~\cite{PhysRevX.8.041015}; lower-weight encodings~\cite{Bravyi:2000vfj,Havlicek:2017vcq} or geometrically local encodings change constant factors and locality properties and these may prove decisive to the comparison between domain-wall and overlap implementations.

\begin{acknowledgments}
  \emph{Acknowledgments-}
  HS would like to thank Yigal Shamir for an illuminating discussion about treatment of near-zero modes in classical computations of overlap opertor.
  We thank ECT$^*$ and the organizers of the Workshop ``Bridging analytical and numerical methods for quantum field theory'' for their support during the workshop (Trento, August 2025), during which this work was initiated.
  We acknowledge the support of the U.S.\ Department of Energy, Office of
  Science, Office of High Energy Physics Quantum Information Science Enabled
  Discovery (QuantISED) program ``Intersections of QIS and Theoretical Particle Physics''.  This work was produced by Fermi
  Forward Discovery Group, LLC under Contract No.\ 89243024CSC000002 with the
  U.S.\ Department of Energy, Office of Science, Office of High Energy
  Physics.
\end{acknowledgments}

\bibliography{refs.bib}

\begin{thebibliography}{54}%
\makeatletter
\providecommand \@ifxundefined [1]{%
 \@ifx{#1\undefined}
}%
\providecommand \@ifnum [1]{%
 \ifnum #1\expandafter \@firstoftwo
 \else \expandafter \@secondoftwo
 \fi
}%
\providecommand \@ifx [1]{%
 \ifx #1\expandafter \@firstoftwo
 \else \expandafter \@secondoftwo
 \fi
}%
\providecommand \natexlab [1]{#1}%
\providecommand \enquote  [1]{``#1''}%
\providecommand \bibnamefont  [1]{#1}%
\providecommand \bibfnamefont [1]{#1}%
\providecommand \citenamefont [1]{#1}%
\providecommand \href@noop [0]{\@secondoftwo}%
\providecommand \href [0]{\begingroup \@sanitize@url \@href}%
\providecommand \@href[1]{\@@startlink{#1}\@@href}%
\providecommand \@@href[1]{\endgroup#1\@@endlink}%
\providecommand \@sanitize@url [0]{\catcode `\\12\catcode `\$12\catcode
  `\&12\catcode `\#12\catcode `\^12\catcode `\_12\catcode `\%12\relax}%
\providecommand \@@startlink[1]{}%
\providecommand \@@endlink[0]{}%
\providecommand \url  [0]{\begingroup\@sanitize@url \@url }%
\providecommand \@url [1]{\endgroup\@href {#1}{\urlprefix }}%
\providecommand \urlprefix  [0]{URL }%
\providecommand \Eprint [0]{\href }%
\providecommand \doibase [0]{https://doi.org/}%
\providecommand \selectlanguage [0]{\@gobble}%
\providecommand \bibinfo  [0]{\@secondoftwo}%
\providecommand \bibfield  [0]{\@secondoftwo}%
\providecommand \translation [1]{[#1]}%
\providecommand \BibitemOpen [0]{}%
\providecommand \bibitemStop [0]{}%
\providecommand \bibitemNoStop [0]{.\EOS\space}%
\providecommand \EOS [0]{\spacefactor3000\relax}%
\providecommand \BibitemShut  [1]{\csname bibitem#1\endcsname}%
\let\auto@bib@innerbib\@empty
\bibitem [{\citenamefont {Nielsen}\ and\ \citenamefont
  {Ninomiya}(1981{\natexlab{a}})}]{nielsen_absence_1981a}%
  \BibitemOpen
  \bibfield  {author} {\bibinfo {author} {\bibfnamefont {H.~B.}\ \bibnamefont
  {Nielsen}}\ and\ \bibinfo {author} {\bibfnamefont {M.}~\bibnamefont
  {Ninomiya}},\ }\bibfield  {title} {\bibinfo {title} {Absence of neutrinos on
  a lattice: ({{I}}). {{Proof}} by homotopy theory},\ }\href
  {https://doi.org/10.1016/0550-3213(81)90361-8} {\bibfield  {journal}
  {\bibinfo  {journal} {Nuclear Physics B}\ }\textbf {\bibinfo {volume}
  {185}},\ \bibinfo {pages} {20} (\bibinfo {year}
  {1981}{\natexlab{a}})}\BibitemShut {NoStop}%
\bibitem [{\citenamefont {Nielsen}\ and\ \citenamefont
  {Ninomiya}(1981{\natexlab{b}})}]{nielsen_absence_1981}%
  \BibitemOpen
  \bibfield  {author} {\bibinfo {author} {\bibfnamefont {H.~B.}\ \bibnamefont
  {Nielsen}}\ and\ \bibinfo {author} {\bibfnamefont {M.}~\bibnamefont
  {Ninomiya}},\ }\bibfield  {title} {\bibinfo {title} {Absence of neutrinos on
  a lattice: ({{II}}). {{Intuitive}} topological proof},\ }\href
  {https://doi.org/10.1016/0550-3213(81)90524-1} {\bibfield  {journal}
  {\bibinfo  {journal} {Nuclear Physics B}\ }\textbf {\bibinfo {volume}
  {193}},\ \bibinfo {pages} {173} (\bibinfo {year}
  {1981}{\natexlab{b}})}\BibitemShut {NoStop}%
\bibitem [{\citenamefont {Karsten}(1981)}]{karsten_lattice_1981}%
  \BibitemOpen
  \bibfield  {author} {\bibinfo {author} {\bibfnamefont {L.~H.}\ \bibnamefont
  {Karsten}},\ }\bibfield  {title} {\bibinfo {title} {Lattice {{Fermions}} in
  {{Euclidean Space-time}}},\ }\href
  {https://doi.org/10.1016/0370-2693(81)90133-7} {\bibfield  {journal}
  {\bibinfo  {journal} {Phys. Lett. B}\ }\textbf {\bibinfo {volume} {104}},\
  \bibinfo {pages} {315} (\bibinfo {year} {1981})}\BibitemShut {NoStop}%
\bibitem [{\citenamefont {Ginsparg}\ and\ \citenamefont
  {Wilson}(1982)}]{ginsparg_remnant_1982}%
  \BibitemOpen
  \bibfield  {author} {\bibinfo {author} {\bibfnamefont {P.~H.}\ \bibnamefont
  {Ginsparg}}\ and\ \bibinfo {author} {\bibfnamefont {K.~G.}\ \bibnamefont
  {Wilson}},\ }\bibfield  {title} {\bibinfo {title} {A {{Remnant}} of {{Chiral
  Symmetry}} on the {{Lattice}}},\ }\href
  {https://doi.org/10.1103/PhysRevD.25.2649} {\bibfield  {journal} {\bibinfo
  {journal} {Phys. Rev. D}\ }\textbf {\bibinfo {volume} {25}},\ \bibinfo
  {pages} {2649} (\bibinfo {year} {1982})}\BibitemShut {NoStop}%
\bibitem [{\citenamefont
  {Neuberger}(1998{\natexlab{a}})}]{neuberger_vectorlike_1998}%
  \BibitemOpen
  \bibfield  {author} {\bibinfo {author} {\bibfnamefont {H.}~\bibnamefont
  {Neuberger}},\ }\bibfield  {title} {\bibinfo {title} {Vectorlike gauge
  theories with almost massless fermions on the lattice},\ }\href
  {https://doi.org/10.1103/PhysRevD.57.5417} {\bibfield  {journal} {\bibinfo
  {journal} {Phys. Rev. D}\ }\textbf {\bibinfo {volume} {57}},\ \bibinfo
  {pages} {5417} (\bibinfo {year} {1998}{\natexlab{a}})}\BibitemShut {NoStop}%
\bibitem [{\citenamefont
  {Neuberger}(1998{\natexlab{b}})}]{neuberger_more_1998}%
  \BibitemOpen
  \bibfield  {author} {\bibinfo {author} {\bibfnamefont {H.}~\bibnamefont
  {Neuberger}},\ }\bibfield  {title} {\bibinfo {title} {More about exactly
  massless quarks on the lattice},\ }\href
  {https://doi.org/10.1016/S0370-2693(98)00355-4} {\bibfield  {journal}
  {\bibinfo  {journal} {Physics Letters B}\ }\textbf {\bibinfo {volume}
  {427}},\ \bibinfo {pages} {353} (\bibinfo {year} {1998}{\natexlab{b}})},\
  \Eprint {https://arxiv.org/abs/hep-lat/9801031} {arXiv:hep-lat/9801031}
  \BibitemShut {NoStop}%
\bibitem [{\citenamefont {Kaplan}(1992)}]{kaplan_method_1992}%
  \BibitemOpen
  \bibfield  {author} {\bibinfo {author} {\bibfnamefont {D.~B.}\ \bibnamefont
  {Kaplan}},\ }\bibfield  {title} {\bibinfo {title} {A method for simulating
  chiral fermions on the lattice},\ }\href
  {https://doi.org/10.1016/0370-2693(92)91112-M} {\bibfield  {journal}
  {\bibinfo  {journal} {Physics Letters B}\ }\textbf {\bibinfo {volume}
  {288}},\ \bibinfo {pages} {342} (\bibinfo {year} {1992})}\BibitemShut
  {NoStop}%
\bibitem [{\citenamefont {Kaplan}(1993)}]{kaplan_chiral_1993}%
  \BibitemOpen
  \bibfield  {author} {\bibinfo {author} {\bibfnamefont {D.~B.}\ \bibnamefont
  {Kaplan}},\ }\bibfield  {title} {\bibinfo {title} {Chiral fermions on the
  lattice},\ }\href {https://doi.org/10.1016/0920-5632(93)90282-B} {\bibfield
  {journal} {\bibinfo  {journal} {Nuclear Physics B - Proceedings Supplements}\
  }\bibinfo {series} {Proceedings of the {{International Symposium}} On},\
  \textbf {\bibinfo {volume} {30}},\ \bibinfo {pages} {597} (\bibinfo {year}
  {1993})}\BibitemShut {NoStop}%
\bibitem [{\citenamefont {Singh}(2025)}]{singh_ginspargwilson_2025}%
  \BibitemOpen
  \bibfield  {author} {\bibinfo {author} {\bibfnamefont {H.}~\bibnamefont
  {Singh}},\ }\href {https://doi.org/10.48550/arXiv.2505.20419} {\bibinfo
  {title} {Ginsparg-{{Wilson Hamiltonians}} with {{Improved Chiral Symmetry}}}}
  (\bibinfo {year} {2025}),\ \Eprint {https://arxiv.org/abs/2505.20419}
  {arXiv:2505.20419 [hep-lat]} \BibitemShut {NoStop}%
\bibitem [{\citenamefont {Gioia}\ and\ \citenamefont
  {Thorngren}(2025)}]{gioia_exact_2025}%
  \BibitemOpen
  \bibfield  {author} {\bibinfo {author} {\bibfnamefont {L.}~\bibnamefont
  {Gioia}}\ and\ \bibinfo {author} {\bibfnamefont {R.}~\bibnamefont
  {Thorngren}},\ }\href {https://doi.org/10.48550/arXiv.2503.07708} {\bibinfo
  {title} {Exact {{Chiral Symmetries}} of 3+{{1D Hamiltonian Lattice
  Fermions}}}} (\bibinfo {year} {2025}),\ \Eprint
  {https://arxiv.org/abs/2503.07708} {arXiv:2503.07708 [cond-mat]} \BibitemShut
  {NoStop}%
\bibitem [{\citenamefont {Clancy}(2024)}]{clancy_ginspargwilson_2024}%
  \BibitemOpen
  \bibfield  {author} {\bibinfo {author} {\bibfnamefont {M.}~\bibnamefont
  {Clancy}},\ }\bibfield  {title} {\bibinfo {title} {Toward a
  {{Ginsparg-Wilson}} lattice {{Hamiltonian}}},\ }\href
  {https://doi.org/10.1103/PhysRevD.110.L011502} {\bibfield  {journal}
  {\bibinfo  {journal} {Phys. Rev. D}\ }\textbf {\bibinfo {volume} {110}},\
  \bibinfo {pages} {L011502} (\bibinfo {year} {2024})}\BibitemShut {NoStop}%
\bibitem [{\citenamefont {Chatterjee}\ \emph {et~al.}(2024)\citenamefont
  {Chatterjee}, \citenamefont {Pace},\ and\ \citenamefont
  {Shao}}]{chatterjee_quantized_2024}%
  \BibitemOpen
  \bibfield  {author} {\bibinfo {author} {\bibfnamefont {A.}~\bibnamefont
  {Chatterjee}}, \bibinfo {author} {\bibfnamefont {S.~D.}\ \bibnamefont
  {Pace}},\ and\ \bibinfo {author} {\bibfnamefont {S.-H.}\ \bibnamefont
  {Shao}},\ }\href {https://doi.org/10.48550/arXiv.2409.12220} {\bibinfo
  {title} {Quantized axial charge of staggered fermions and the chiral
  anomaly}} (\bibinfo {year} {2024}),\ \Eprint
  {https://arxiv.org/abs/2409.12220} {arXiv:2409.12220} \BibitemShut {NoStop}%
\bibitem [{\citenamefont {Yamaoka}(2025)}]{yamaoka_quantized_2025}%
  \BibitemOpen
  \bibfield  {author} {\bibinfo {author} {\bibfnamefont {T.}~\bibnamefont
  {Yamaoka}},\ }\href {https://doi.org/10.48550/arXiv.2504.10263} {\bibinfo
  {title} {Quantized {{Axial Charge}} in the {{Hamiltonian Approach}} to
  {{Wilson Fermions}}}} (\bibinfo {year} {2025}),\ \Eprint
  {https://arxiv.org/abs/2504.10263} {arXiv:2504.10263 [hep-lat]} \BibitemShut
  {NoStop}%
\bibitem [{\citenamefont {Haegeman}\ \emph {et~al.}(2024)\citenamefont
  {Haegeman}, \citenamefont {Lootens}, \citenamefont {Mortier}, \citenamefont
  {Stottmeister}, \citenamefont {Ueda},\ and\ \citenamefont
  {Verstraete}}]{haegeman_interacting_2024}%
  \BibitemOpen
  \bibfield  {author} {\bibinfo {author} {\bibfnamefont {J.}~\bibnamefont
  {Haegeman}}, \bibinfo {author} {\bibfnamefont {L.}~\bibnamefont {Lootens}},
  \bibinfo {author} {\bibfnamefont {Q.}~\bibnamefont {Mortier}}, \bibinfo
  {author} {\bibfnamefont {A.}~\bibnamefont {Stottmeister}}, \bibinfo {author}
  {\bibfnamefont {A.}~\bibnamefont {Ueda}},\ and\ \bibinfo {author}
  {\bibfnamefont {F.}~\bibnamefont {Verstraete}},\ }\href
  {http://arxiv.org/abs/2405.10285} {\bibinfo {title} {Interacting chiral
  fermions on the lattice with matrix product operator norms}} (\bibinfo {year}
  {2024}),\ \Eprint {https://arxiv.org/abs/2405.10285} {arXiv:2405.10285
  [cond-mat, physics:hep-lat, physics:hep-th, physics:quant-ph]} \BibitemShut
  {NoStop}%
\bibitem [{\citenamefont {Cheluvaraja}\ and\ \citenamefont
  {Hari~Dass}(2001)}]{cheluvaraja_axial_2001}%
  \BibitemOpen
  \bibfield  {author} {\bibinfo {author} {\bibfnamefont {S.}~\bibnamefont
  {Cheluvaraja}}\ and\ \bibinfo {author} {\bibfnamefont {N.~D.}\ \bibnamefont
  {Hari~Dass}},\ }\bibfield  {title} {\bibinfo {title} {Axial anomaly and
  {{Ginsparg-Wilson}} fermions in the lattice {{Dirac}} sea picture},\ }\href
  {https://doi.org/10.1016/S0550-3213(00)00772-0} {\bibfield  {journal}
  {\bibinfo  {journal} {Nucl. Phys. B}\ }\textbf {\bibinfo {volume} {598}},\
  \bibinfo {pages} {134} (\bibinfo {year} {2001})}\BibitemShut {NoStop}%
\bibitem [{\citenamefont {Creutz}\ \emph {et~al.}(2002)\citenamefont {Creutz},
  \citenamefont {Horvath},\ and\ \citenamefont {Neuberger}}]{creutz_new_2002}%
  \BibitemOpen
  \bibfield  {author} {\bibinfo {author} {\bibfnamefont {M.}~\bibnamefont
  {Creutz}}, \bibinfo {author} {\bibfnamefont {I.}~\bibnamefont {Horvath}},\
  and\ \bibinfo {author} {\bibfnamefont {H.}~\bibnamefont {Neuberger}},\
  }\bibfield  {title} {\bibinfo {title} {A {{New}} fermion {{Hamiltonian}} for
  lattice gauge theory},\ }\href
  {https://doi.org/10.1016/S0920-5632(01)01836-9} {\bibfield  {journal}
  {\bibinfo  {journal} {Nucl. Phys. B Proc. Suppl.}\ }\textbf {\bibinfo
  {volume} {106}},\ \bibinfo {pages} {760} (\bibinfo {year}
  {2002})}\BibitemShut {NoStop}%
\bibitem [{\citenamefont {Rhodes}\ \emph {et~al.}(2024)\citenamefont {Rhodes},
  \citenamefont {Kreshchuk},\ and\ \citenamefont {Pathak}}]{Rhodes:2024zbr}%
  \BibitemOpen
  \bibfield  {author} {\bibinfo {author} {\bibfnamefont {M.~L.}\ \bibnamefont
  {Rhodes}}, \bibinfo {author} {\bibfnamefont {M.}~\bibnamefont {Kreshchuk}},\
  and\ \bibinfo {author} {\bibfnamefont {S.}~\bibnamefont {Pathak}},\
  }\bibfield  {title} {\bibinfo {title} {{Exponential Improvements in the
  Simulation of Lattice Gauge Theories Using Near-Optimal Techniques}},\ }\href
  {https://doi.org/10.1103/PRXQuantum.5.040347} {\bibfield  {journal} {\bibinfo
   {journal} {PRX Quantum}\ }\textbf {\bibinfo {volume} {5}},\ \bibinfo {pages}
  {040347} (\bibinfo {year} {2024})},\ \Eprint
  {https://arxiv.org/abs/2405.10416} {arXiv:2405.10416 [quant-ph]} \BibitemShut
  {NoStop}%
\bibitem [{\citenamefont {Trotter}(1959)}]{Trotter:1959ytf}%
  \BibitemOpen
  \bibfield  {author} {\bibinfo {author} {\bibfnamefont {H.~F.}\ \bibnamefont
  {Trotter}},\ }\bibfield  {title} {\bibinfo {title} {{On the product of
  semi-groups of operators}},\ }\href
  {https://doi.org/10.1090/s0002-9939-1959-0108732-6} {\bibfield  {journal}
  {\bibinfo  {journal} {Proc. Am. Math. Soc.}\ }\textbf {\bibinfo {volume}
  {10}},\ \bibinfo {pages} {545} (\bibinfo {year} {1959})}\BibitemShut
  {NoStop}%
\bibitem [{\citenamefont {Suzuki}(1976)}]{Suzuki:1976be}%
  \BibitemOpen
  \bibfield  {author} {\bibinfo {author} {\bibfnamefont {M.}~\bibnamefont
  {Suzuki}},\ }\bibfield  {title} {\bibinfo {title} {{Generalized Trotter's
  Formula and Systematic Approximants of Exponential Operators and Inner
  Derivations with Applications to Many Body Problems}},\ }\href
  {https://doi.org/10.1007/BF01609348} {\bibfield  {journal} {\bibinfo
  {journal} {Commun. Math. Phys.}\ }\textbf {\bibinfo {volume} {51}},\ \bibinfo
  {pages} {183} (\bibinfo {year} {1976})}\BibitemShut {NoStop}%
\bibitem [{\citenamefont {Suzuki}(1985)}]{Suzuki:1985}%
  \BibitemOpen
  \bibfield  {author} {\bibinfo {author} {\bibfnamefont {M.}~\bibnamefont
  {Suzuki}},\ }\bibfield  {title} {\bibinfo {title} {Decomposition formulas of
  exponential operators and lie exponentials with some applications to quantum
  mechanics and statistical physics},\ }\bibfield  {journal} {\bibinfo
  {journal} {Journal of Mathematical Physics}\ }\textbf {\bibinfo {volume}
  {26}},\ \href {https://doi.org/10.1063/1.526596} {10.1063/1.526596} (\bibinfo
  {year} {1985}),\ \Eprint
  {https://arxiv.org/abs/https://doi.org/10.1063/1.526596}
  {https://doi.org/10.1063/1.526596} \BibitemShut {NoStop}%
\bibitem [{\citenamefont {Childs}\ \emph {et~al.}(2021)\citenamefont {Childs},
  \citenamefont {Su}, \citenamefont {Tran}, \citenamefont {Wiebe},\ and\
  \citenamefont {Zhu}}]{PhysRevX.11.011020}%
  \BibitemOpen
  \bibfield  {author} {\bibinfo {author} {\bibfnamefont {A.~M.}\ \bibnamefont
  {Childs}}, \bibinfo {author} {\bibfnamefont {Y.}~\bibnamefont {Su}}, \bibinfo
  {author} {\bibfnamefont {M.~C.}\ \bibnamefont {Tran}}, \bibinfo {author}
  {\bibfnamefont {N.}~\bibnamefont {Wiebe}},\ and\ \bibinfo {author}
  {\bibfnamefont {S.}~\bibnamefont {Zhu}},\ }\bibfield  {title} {\bibinfo
  {title} {Theory of trotter error with commutator scaling},\ }\href
  {https://doi.org/10.1103/PhysRevX.11.011020} {\bibfield  {journal} {\bibinfo
  {journal} {Phys. Rev. X}\ }\textbf {\bibinfo {volume} {11}},\ \bibinfo
  {pages} {011020} (\bibinfo {year} {2021})}\BibitemShut {NoStop}%
\bibitem [{\citenamefont {Low}\ and\ \citenamefont
  {Chuang}(2017)}]{low2017hamiltoniansimulationuniformspectral}%
  \BibitemOpen
  \bibfield  {author} {\bibinfo {author} {\bibfnamefont {G.~H.}\ \bibnamefont
  {Low}}\ and\ \bibinfo {author} {\bibfnamefont {I.~L.}\ \bibnamefont
  {Chuang}},\ }\href {https://arxiv.org/abs/1707.05391} {\bibinfo {title}
  {Hamiltonian simulation by uniform spectral amplification}} (\bibinfo {year}
  {2017}),\ \Eprint {https://arxiv.org/abs/1707.05391} {arXiv:1707.05391
  [quant-ph]} \BibitemShut {NoStop}%
\bibitem [{\citenamefont {Low}\ and\ \citenamefont {Chuang}(2019)}]{Low_2019}%
  \BibitemOpen
  \bibfield  {author} {\bibinfo {author} {\bibfnamefont {G.~H.}\ \bibnamefont
  {Low}}\ and\ \bibinfo {author} {\bibfnamefont {I.~L.}\ \bibnamefont
  {Chuang}},\ }\bibfield  {title} {\bibinfo {title} {Hamiltonian simulation by
  qubitization},\ }\href {https://doi.org/10.22331/q-2019-07-12-163} {\bibfield
   {journal} {\bibinfo  {journal} {Quantum}\ }\textbf {\bibinfo {volume} {3}},\
  \bibinfo {pages} {163} (\bibinfo {year} {2019})}\BibitemShut {NoStop}%
\bibitem [{\citenamefont {Gilyén}\ \emph {et~al.}(2019)\citenamefont
  {Gilyén}, \citenamefont {Su}, \citenamefont {Low},\ and\ \citenamefont
  {Wiebe}}]{Gilyen:2018khw}%
  \BibitemOpen
  \bibfield  {author} {\bibinfo {author} {\bibfnamefont {A.}~\bibnamefont
  {Gilyén}}, \bibinfo {author} {\bibfnamefont {Y.}~\bibnamefont {Su}},
  \bibinfo {author} {\bibfnamefont {G.~H.}\ \bibnamefont {Low}},\ and\ \bibinfo
  {author} {\bibfnamefont {N.}~\bibnamefont {Wiebe}},\ }\bibfield  {title}
  {\bibinfo {title} {Quantum singular value transformation and beyond:
  exponential improvements for quantum matrix arithmetics},\ }in\ \href
  {https://doi.org/10.1145/3313276.3316366} {\emph {\bibinfo {booktitle}
  {Proceedings of the 51st Annual ACM SIGACT Symposium on Theory of
  Computing}}},\ \bibinfo {series and number} {STOC ’19}\ (\bibinfo
  {publisher} {ACM},\ \bibinfo {year} {2019})\ p.\ \bibinfo {pages}
  {193–204}\BibitemShut {NoStop}%
\bibitem [{\citenamefont {Berry}\ \emph {et~al.}(2015)\citenamefont {Berry},
  \citenamefont {Childs}, \citenamefont {Cleve}, \citenamefont {Kothari},\ and\
  \citenamefont {Somma}}]{Berry:2014ivo}%
  \BibitemOpen
  \bibfield  {author} {\bibinfo {author} {\bibfnamefont {D.~W.}\ \bibnamefont
  {Berry}}, \bibinfo {author} {\bibfnamefont {A.~M.}\ \bibnamefont {Childs}},
  \bibinfo {author} {\bibfnamefont {R.}~\bibnamefont {Cleve}}, \bibinfo
  {author} {\bibfnamefont {R.}~\bibnamefont {Kothari}},\ and\ \bibinfo {author}
  {\bibfnamefont {R.~D.}\ \bibnamefont {Somma}},\ }\bibfield  {title} {\bibinfo
  {title} {{Simulating Hamiltonian Dynamics with a Truncated Taylor Series}},\
  }\href {https://doi.org/10.1103/PhysRevLett.114.090502} {\bibfield  {journal}
  {\bibinfo  {journal} {Phys. Rev. Lett.}\ }\textbf {\bibinfo {volume} {114}},\
  \bibinfo {pages} {090502} (\bibinfo {year} {2015})},\ \Eprint
  {https://arxiv.org/abs/1412.4687} {arXiv:1412.4687 [quant-ph]} \BibitemShut
  {NoStop}%
\bibitem [{\citenamefont {Rajput}\ \emph {et~al.}(2022)\citenamefont {Rajput},
  \citenamefont {Roggero},\ and\ \citenamefont {Wiebe}}]{Rajput:2021khs}%
  \BibitemOpen
  \bibfield  {author} {\bibinfo {author} {\bibfnamefont {A.}~\bibnamefont
  {Rajput}}, \bibinfo {author} {\bibfnamefont {A.}~\bibnamefont {Roggero}},\
  and\ \bibinfo {author} {\bibfnamefont {N.}~\bibnamefont {Wiebe}},\ }\bibfield
   {title} {\bibinfo {title} {{Hybridized Methods for Quantum Simulation in the
  Interaction Picture}},\ }\href {https://doi.org/10.22331/q-2022-08-17-780}
  {\bibfield  {journal} {\bibinfo  {journal} {Quantum}\ }\textbf {\bibinfo
  {volume} {6}},\ \bibinfo {pages} {780} (\bibinfo {year} {2022})},\ \Eprint
  {https://arxiv.org/abs/2109.03308} {arXiv:2109.03308 [quant-ph]} \BibitemShut
  {NoStop}%
\bibitem [{\citenamefont {Kane}\ \emph {et~al.}(2024)\citenamefont {Kane},
  \citenamefont {Gomes},\ and\ \citenamefont {Kreshchuk}}]{Kane:2023jdo}%
  \BibitemOpen
  \bibfield  {author} {\bibinfo {author} {\bibfnamefont {C.~F.}\ \bibnamefont
  {Kane}}, \bibinfo {author} {\bibfnamefont {N.}~\bibnamefont {Gomes}},\ and\
  \bibinfo {author} {\bibfnamefont {M.}~\bibnamefont {Kreshchuk}},\ }\bibfield
  {title} {\bibinfo {title} {{Nearly optimal state preparation for quantum
  simulations of lattice gauge theories}},\ }\href
  {https://doi.org/10.1103/PhysRevA.110.012455} {\bibfield  {journal} {\bibinfo
   {journal} {Phys. Rev. A}\ }\textbf {\bibinfo {volume} {110}},\ \bibinfo
  {pages} {012455} (\bibinfo {year} {2024})},\ \Eprint
  {https://arxiv.org/abs/2310.13757} {arXiv:2310.13757 [quant-ph]} \BibitemShut
  {NoStop}%
\bibitem [{\citenamefont {Hariprakash}\ \emph {et~al.}(2025)\citenamefont
  {Hariprakash}, \citenamefont {Modi}, \citenamefont {Kreshchuk}, \citenamefont
  {Kane},\ and\ \citenamefont {Bauer}}]{Hariprakash:2023tla}%
  \BibitemOpen
  \bibfield  {author} {\bibinfo {author} {\bibfnamefont {S.}~\bibnamefont
  {Hariprakash}}, \bibinfo {author} {\bibfnamefont {N.~S.}\ \bibnamefont
  {Modi}}, \bibinfo {author} {\bibfnamefont {M.}~\bibnamefont {Kreshchuk}},
  \bibinfo {author} {\bibfnamefont {C.~F.}\ \bibnamefont {Kane}},\ and\
  \bibinfo {author} {\bibfnamefont {C.~W.}\ \bibnamefont {Bauer}},\ }\bibfield
  {title} {\bibinfo {title} {{Strategies for simulating the time evolution of
  Hamiltonian lattice field theories}},\ }\href
  {https://doi.org/10.1103/PhysRevA.111.022419} {\bibfield  {journal} {\bibinfo
   {journal} {Phys. Rev. A}\ }\textbf {\bibinfo {volume} {111}},\ \bibinfo
  {pages} {022419} (\bibinfo {year} {2025})},\ \Eprint
  {https://arxiv.org/abs/2312.11637} {arXiv:2312.11637 [quant-ph]} \BibitemShut
  {NoStop}%
\bibitem [{\citenamefont {Spagnoli}\ \emph {et~al.}(2026)\citenamefont
  {Spagnoli}, \citenamefont {Roggero},\ and\ \citenamefont
  {Wiebe}}]{Spagnoli:2024mib}%
  \BibitemOpen
  \bibfield  {author} {\bibinfo {author} {\bibfnamefont {L.}~\bibnamefont
  {Spagnoli}}, \bibinfo {author} {\bibfnamefont {A.}~\bibnamefont {Roggero}},\
  and\ \bibinfo {author} {\bibfnamefont {N.}~\bibnamefont {Wiebe}},\ }\bibfield
   {title} {\bibinfo {title} {{Fault-tolerant simulation of Lattice Gauge
  Theories with gauge covariant codes}},\ }\href
  {https://doi.org/10.22331/q-2026-01-16-1968} {\bibfield  {journal} {\bibinfo
  {journal} {Quantum}\ }\textbf {\bibinfo {volume} {10}},\ \bibinfo {pages}
  {1968} (\bibinfo {year} {2026})},\ \Eprint {https://arxiv.org/abs/2405.19293}
  {arXiv:2405.19293 [quant-ph]} \BibitemShut {NoStop}%
\bibitem [{\citenamefont {Kiss}\ \emph {et~al.}(2025)\citenamefont {Kiss},
  \citenamefont {Azad}, \citenamefont {Requena}, \citenamefont {Roggero},
  \citenamefont {Wakeham},\ and\ \citenamefont {Arrazola}}]{Kiss:2024sep}%
  \BibitemOpen
  \bibfield  {author} {\bibinfo {author} {\bibfnamefont {O.}~\bibnamefont
  {Kiss}}, \bibinfo {author} {\bibfnamefont {U.}~\bibnamefont {Azad}}, \bibinfo
  {author} {\bibfnamefont {B.}~\bibnamefont {Requena}}, \bibinfo {author}
  {\bibfnamefont {A.}~\bibnamefont {Roggero}}, \bibinfo {author} {\bibfnamefont
  {D.}~\bibnamefont {Wakeham}},\ and\ \bibinfo {author} {\bibfnamefont {J.~M.}\
  \bibnamefont {Arrazola}},\ }\bibfield  {title} {\bibinfo {title} {{Early
  Fault-Tolerant Quantum Algorithms in Practice: Application to Ground-State
  Energy Estimation}},\ }\href {https://doi.org/10.22331/q-2025-04-01-1682}
  {\bibfield  {journal} {\bibinfo  {journal} {Quantum}\ }\textbf {\bibinfo
  {volume} {9}},\ \bibinfo {pages} {1682} (\bibinfo {year} {2025})},\ \Eprint
  {https://arxiv.org/abs/2405.03754} {arXiv:2405.03754 [quant-ph]} \BibitemShut
  {NoStop}%
\bibitem [{\citenamefont {Kane}\ \emph {et~al.}(2025)\citenamefont {Kane},
  \citenamefont {Hariprakash}, \citenamefont {Modi}, \citenamefont
  {Kreshchuk},\ and\ \citenamefont {Bauer}}]{Kane:2024odt}%
  \BibitemOpen
  \bibfield  {author} {\bibinfo {author} {\bibfnamefont {C.~F.}\ \bibnamefont
  {Kane}}, \bibinfo {author} {\bibfnamefont {S.}~\bibnamefont {Hariprakash}},
  \bibinfo {author} {\bibfnamefont {N.~S.}\ \bibnamefont {Modi}}, \bibinfo
  {author} {\bibfnamefont {M.}~\bibnamefont {Kreshchuk}},\ and\ \bibinfo
  {author} {\bibfnamefont {C.~W.}\ \bibnamefont {Bauer}},\ }\bibfield  {title}
  {\bibinfo {title} {{Block encoding bosons by signal processing}},\ }\href
  {https://doi.org/10.22331/q-2025-05-15-1747} {\bibfield  {journal} {\bibinfo
  {journal} {Quantum}\ }\textbf {\bibinfo {volume} {9}},\ \bibinfo {pages}
  {1747} (\bibinfo {year} {2025})},\ \Eprint {https://arxiv.org/abs/2408.16824}
  {arXiv:2408.16824 [quant-ph]} \BibitemShut {NoStop}%
\bibitem [{\citenamefont {Spagnoli}\ \emph {et~al.}(2025)\citenamefont
  {Spagnoli}, \citenamefont {Lissoni},\ and\ \citenamefont
  {Roggero}}]{Spagnoli:2025xvk}%
  \BibitemOpen
  \bibfield  {author} {\bibinfo {author} {\bibfnamefont {L.}~\bibnamefont
  {Spagnoli}}, \bibinfo {author} {\bibfnamefont {C.}~\bibnamefont {Lissoni}},\
  and\ \bibinfo {author} {\bibfnamefont {A.}~\bibnamefont {Roggero}},\
  }\href@noop {} {\bibinfo {title} {{Quantum Simulation of Nuclear Dynamics in
  First Quantization}}} (\bibinfo {year} {2025}),\ \Eprint
  {https://arxiv.org/abs/2507.22814} {arXiv:2507.22814 [quant-ph]} \BibitemShut
  {NoStop}%
\bibitem [{\citenamefont {Borici}(2004)}]{borici_computational_2004}%
  \BibitemOpen
  \bibfield  {author} {\bibinfo {author} {\bibfnamefont {A.}~\bibnamefont
  {Borici}},\ }\bibfield  {title} {\bibinfo {title} {{Computational methods for
  the fermion determinant and the link between overlap and domain wall
  fermions}},\ }in\ \href@noop {} {\emph {\bibinfo {booktitle} {{3rd
  International Workshop on Numerical Analysis and Lattice QCD}}}}\ (\bibinfo
  {year} {2004})\ pp.\ \bibinfo {pages} {25--39},\ \Eprint
  {https://arxiv.org/abs/hep-lat/0402035} {arXiv:hep-lat/0402035} \BibitemShut
  {NoStop}%
\bibitem [{\citenamefont {Brower}\ \emph {et~al.}(2006)\citenamefont {Brower},
  \citenamefont {Neff},\ and\ \citenamefont {Orginos}}]{brower_mobius_2006}%
  \BibitemOpen
  \bibfield  {author} {\bibinfo {author} {\bibfnamefont {R.~C.}\ \bibnamefont
  {Brower}}, \bibinfo {author} {\bibfnamefont {H.}~\bibnamefont {Neff}},\ and\
  \bibinfo {author} {\bibfnamefont {K.}~\bibnamefont {Orginos}},\ }\bibfield
  {title} {\bibinfo {title} {Mobius fermions},\ }\href
  {https://doi.org/10.1016/j.nuclphysbps.2006.01.047} {\bibfield  {journal}
  {\bibinfo  {journal} {Nucl. Phys. B Proc. Suppl.}\ }\textbf {\bibinfo
  {volume} {153}},\ \bibinfo {pages} {191} (\bibinfo {year}
  {2006})}\BibitemShut {NoStop}%
\bibitem [{\citenamefont {Narayanan}\ and\ \citenamefont
  {Neuberger}(1993)}]{narayanan_infinitely_1993}%
  \BibitemOpen
  \bibfield  {author} {\bibinfo {author} {\bibfnamefont {R.}~\bibnamefont
  {Narayanan}}\ and\ \bibinfo {author} {\bibfnamefont {H.}~\bibnamefont
  {Neuberger}},\ }\bibfield  {title} {\bibinfo {title} {Infinitely many
  regulator fields for chiral fermions},\ }\href
  {https://doi.org/10.1016/0370-2693(93)90636-V} {\bibfield  {journal}
  {\bibinfo  {journal} {Physics Letters B}\ }\textbf {\bibinfo {volume}
  {302}},\ \bibinfo {pages} {62} (\bibinfo {year} {1993})}\BibitemShut
  {NoStop}%
\bibitem [{\citenamefont {Childs}\ and\ \citenamefont
  {Wiebe}(2012)}]{childs2012hamiltonian}%
  \BibitemOpen
  \bibfield  {author} {\bibinfo {author} {\bibfnamefont {A.~M.}\ \bibnamefont
  {Childs}}\ and\ \bibinfo {author} {\bibfnamefont {N.}~\bibnamefont {Wiebe}},\
  }\bibfield  {title} {\bibinfo {title} {{Hamiltonian Simulation Using Linear
  Combinations of Unitary Operations}},\ }\href
  {https://doi.org/10.26421/QIC12.11-12-1} {\bibfield  {journal} {\bibinfo
  {journal} {Quant. Inf. Comput.}\ }\textbf {\bibinfo {volume} {12}},\ \bibinfo
  {pages} {0901} (\bibinfo {year} {2012})},\ \Eprint
  {https://arxiv.org/abs/1202.5822} {arXiv:1202.5822 [quant-ph]} \BibitemShut
  {NoStop}%
\bibitem [{\citenamefont {Gidney}(2018)}]{Gidney_2018}%
  \BibitemOpen
  \bibfield  {author} {\bibinfo {author} {\bibfnamefont {C.}~\bibnamefont
  {Gidney}},\ }\bibfield  {title} {\bibinfo {title} {Halving the cost of
  quantum addition},\ }\href {https://doi.org/10.22331/q-2018-06-18-74}
  {\bibfield  {journal} {\bibinfo  {journal} {Quantum}\ }\textbf {\bibinfo
  {volume} {2}},\ \bibinfo {pages} {74} (\bibinfo {year} {2018})}\BibitemShut
  {NoStop}%
\bibitem [{\citenamefont {Plesch}\ and\ \citenamefont
  {Brukner}(2011)}]{Plesch_2011}%
  \BibitemOpen
  \bibfield  {author} {\bibinfo {author} {\bibfnamefont {M.}~\bibnamefont
  {Plesch}}\ and\ \bibinfo {author} {\bibfnamefont {{\v{C}}.}~\bibnamefont
  {Brukner}},\ }\bibfield  {title} {\bibinfo {title} {Quantum-state preparation
  with universal gate decompositions},\ }\bibfield  {journal} {\bibinfo
  {journal} {Physical Review A}\ }\textbf {\bibinfo {volume} {83}},\ \href
  {https://doi.org/10.1103/physreva.83.032302} {10.1103/physreva.83.032302}
  (\bibinfo {year} {2011})\BibitemShut {NoStop}%
\bibitem [{\citenamefont {Babbush}\ \emph {et~al.}(2018)\citenamefont
  {Babbush}, \citenamefont {Gidney}, \citenamefont {Berry}, \citenamefont
  {Wiebe}, \citenamefont {McClean}, \citenamefont {Paler}, \citenamefont
  {Fowler},\ and\ \citenamefont {Neven}}]{PhysRevX.8.041015}%
  \BibitemOpen
  \bibfield  {author} {\bibinfo {author} {\bibfnamefont {R.}~\bibnamefont
  {Babbush}}, \bibinfo {author} {\bibfnamefont {C.}~\bibnamefont {Gidney}},
  \bibinfo {author} {\bibfnamefont {D.~W.}\ \bibnamefont {Berry}}, \bibinfo
  {author} {\bibfnamefont {N.}~\bibnamefont {Wiebe}}, \bibinfo {author}
  {\bibfnamefont {J.}~\bibnamefont {McClean}}, \bibinfo {author} {\bibfnamefont
  {A.}~\bibnamefont {Paler}}, \bibinfo {author} {\bibfnamefont
  {A.}~\bibnamefont {Fowler}},\ and\ \bibinfo {author} {\bibfnamefont
  {H.}~\bibnamefont {Neven}},\ }\bibfield  {title} {\bibinfo {title} {Encoding
  electronic spectra in quantum circuits with linear t complexity},\ }\href
  {https://doi.org/10.1103/PhysRevX.8.041015} {\bibfield  {journal} {\bibinfo
  {journal} {Phys. Rev. X}\ }\textbf {\bibinfo {volume} {8}},\ \bibinfo {pages}
  {041015} (\bibinfo {year} {2018})}\BibitemShut {NoStop}%
\bibitem [{\citenamefont {Luscher}(2007{\natexlab{a}})}]{Luscher:2007se}%
  \BibitemOpen
  \bibfield  {author} {\bibinfo {author} {\bibfnamefont {M.}~\bibnamefont
  {Luscher}},\ }\bibfield  {title} {\bibinfo {title} {{Local coherence and
  deflation of the low quark modes in lattice QCD}},\ }\href
  {https://doi.org/10.1088/1126-6708/2007/07/081} {\bibfield  {journal}
  {\bibinfo  {journal} {JHEP}\ }\textbf {\bibinfo {volume} {07}},\ \bibinfo
  {pages} {081}},\ \Eprint {https://arxiv.org/abs/0706.2298} {arXiv:0706.2298
  [hep-lat]} \BibitemShut {NoStop}%
\bibitem [{\citenamefont {Luscher}(2007{\natexlab{b}})}]{Luscher:2007es}%
  \BibitemOpen
  \bibfield  {author} {\bibinfo {author} {\bibfnamefont {M.}~\bibnamefont
  {Luscher}},\ }\bibfield  {title} {\bibinfo {title} {{Deflation acceleration
  of lattice QCD simulations}},\ }\href
  {https://doi.org/10.1088/1126-6708/2007/12/011} {\bibfield  {journal}
  {\bibinfo  {journal} {JHEP}\ }\textbf {\bibinfo {volume} {12}},\ \bibinfo
  {pages} {011}},\ \Eprint {https://arxiv.org/abs/0710.5417} {arXiv:0710.5417
  [hep-lat]} \BibitemShut {NoStop}%
\bibitem [{\citenamefont {Somma}\ and\ \citenamefont
  {Boixo}(2013)}]{doi:10.1137/120871997}%
  \BibitemOpen
  \bibfield  {author} {\bibinfo {author} {\bibfnamefont {R.~D.}\ \bibnamefont
  {Somma}}\ and\ \bibinfo {author} {\bibfnamefont {S.}~\bibnamefont {Boixo}},\
  }\bibfield  {title} {\bibinfo {title} {Spectral gap amplification},\ }\href
  {https://doi.org/10.1137/120871997} {\bibfield  {journal} {\bibinfo
  {journal} {SIAM Journal on Computing}\ }\textbf {\bibinfo {volume} {42}},\
  \bibinfo {pages} {593} (\bibinfo {year} {2013})}\BibitemShut {NoStop}%
\bibitem [{\citenamefont {Morales}\ \emph {et~al.}(2026)\citenamefont
  {Morales}, \citenamefont {Pira}, \citenamefont {Schleich}, \citenamefont
  {Koor}, \citenamefont {Costa}, \citenamefont {An}, \citenamefont
  {Aspuru-Guzik}, \citenamefont {Lin}, \citenamefont {Rebentrost},\ and\
  \citenamefont {Berry}}]{x6gh-d8gh}%
  \BibitemOpen
  \bibfield  {author} {\bibinfo {author} {\bibfnamefont {M.~E.~S.}\
  \bibnamefont {Morales}}, \bibinfo {author} {\bibfnamefont {L.}~\bibnamefont
  {Pira}}, \bibinfo {author} {\bibfnamefont {P.}~\bibnamefont {Schleich}},
  \bibinfo {author} {\bibfnamefont {K.}~\bibnamefont {Koor}}, \bibinfo {author}
  {\bibfnamefont {P.~C.~S.}\ \bibnamefont {Costa}}, \bibinfo {author}
  {\bibfnamefont {D.}~\bibnamefont {An}}, \bibinfo {author} {\bibfnamefont
  {A.}~\bibnamefont {Aspuru-Guzik}}, \bibinfo {author} {\bibfnamefont
  {L.}~\bibnamefont {Lin}}, \bibinfo {author} {\bibfnamefont {P.}~\bibnamefont
  {Rebentrost}},\ and\ \bibinfo {author} {\bibfnamefont {D.~W.}\ \bibnamefont
  {Berry}},\ }\bibfield  {title} {\bibinfo {title} {Quantum linear system
  solvers: A survey of algorithms and applications},\ }\href
  {https://doi.org/10.1103/x6gh-d8gh} {\bibfield  {journal} {\bibinfo
  {journal} {Rev. Mod. Phys.}\ }\textbf {\bibinfo {volume} {98}},\ \bibinfo
  {pages} {025005} (\bibinfo {year} {2026})}\BibitemShut {NoStop}%
\bibitem [{\citenamefont {Costa}\ \emph {et~al.}(2022)\citenamefont {Costa},
  \citenamefont {An}, \citenamefont {Sanders}, \citenamefont {Su},
  \citenamefont {Babbush},\ and\ \citenamefont {Berry}}]{PRXQuantum.3.040303}%
  \BibitemOpen
  \bibfield  {author} {\bibinfo {author} {\bibfnamefont {P.~C.}\ \bibnamefont
  {Costa}}, \bibinfo {author} {\bibfnamefont {D.}~\bibnamefont {An}}, \bibinfo
  {author} {\bibfnamefont {Y.~R.}\ \bibnamefont {Sanders}}, \bibinfo {author}
  {\bibfnamefont {Y.}~\bibnamefont {Su}}, \bibinfo {author} {\bibfnamefont
  {R.}~\bibnamefont {Babbush}},\ and\ \bibinfo {author} {\bibfnamefont {D.~W.}\
  \bibnamefont {Berry}},\ }\bibfield  {title} {\bibinfo {title} {Optimal
  scaling quantum linear-systems solver via discrete adiabatic theorem},\
  }\href {https://doi.org/10.1103/PRXQuantum.3.040303} {\bibfield  {journal}
  {\bibinfo  {journal} {PRX Quantum}\ }\textbf {\bibinfo {volume} {3}},\
  \bibinfo {pages} {040303} (\bibinfo {year} {2022})}\BibitemShut {NoStop}%
\bibitem [{\citenamefont {Mori}\ \emph {et~al.}(2026)\citenamefont {Mori},
  \citenamefont {Kikuchi}, \citenamefont {Benedetti},\ and\ \citenamefont
  {Rosenkranz}}]{10.1088/2058-9565/ae89e0}%
  \BibitemOpen
  \bibfield  {author} {\bibinfo {author} {\bibfnamefont {H.}~\bibnamefont
  {Mori}}, \bibinfo {author} {\bibfnamefont {Y.}~\bibnamefont {Kikuchi}},
  \bibinfo {author} {\bibfnamefont {M.}~\bibnamefont {Benedetti}},\ and\
  \bibinfo {author} {\bibfnamefont {M.}~\bibnamefont {Rosenkranz}},\ }\bibfield
   {title} {\bibinfo {title} {Sparsity-dependent complexity lower bound of
  quantum linear system solvers},\ }\href
  {http://iopscience.iop.org/article/10.1088/2058-9565/ae89e0} {\bibfield
  {journal} {\bibinfo  {journal} {Quantum Science and Technology}\ } (\bibinfo
  {year} {2026})}\BibitemShut {NoStop}%
\bibitem [{\citenamefont {Haah}\ \emph {et~al.}(2023)\citenamefont {Haah},
  \citenamefont {Hastings}, \citenamefont {Kothari},\ and\ \citenamefont
  {Low}}]{Haah:2018ekc}%
  \BibitemOpen
  \bibfield  {author} {\bibinfo {author} {\bibfnamefont {J.}~\bibnamefont
  {Haah}}, \bibinfo {author} {\bibfnamefont {M.~B.}\ \bibnamefont {Hastings}},
  \bibinfo {author} {\bibfnamefont {R.}~\bibnamefont {Kothari}},\ and\ \bibinfo
  {author} {\bibfnamefont {G.~H.}\ \bibnamefont {Low}},\ }\bibfield  {title}
  {\bibinfo {title} {{Quantum Algorithm for Simulating Real Time Evolution of
  Lattice Hamiltonians}},\ }\href {https://doi.org/10.1137/18m1231511}
  {\bibfield  {journal} {\bibinfo  {journal} {SIAM J. Comput.}\ }\textbf
  {\bibinfo {volume} {52}},\ \bibinfo {pages} {FOCS18} (\bibinfo {year}
  {2023})},\ \Eprint {https://arxiv.org/abs/1801.03922} {arXiv:1801.03922
  [quant-ph]} \BibitemShut {NoStop}%
\bibitem [{\citenamefont {Chiu}\ \emph {et~al.}(2002)\citenamefont {Chiu},
  \citenamefont {Hsieh}, \citenamefont {Huang},\ and\ \citenamefont
  {Huang}}]{Chiu:2002eh}%
  \BibitemOpen
  \bibfield  {author} {\bibinfo {author} {\bibfnamefont {T.-W.}\ \bibnamefont
  {Chiu}}, \bibinfo {author} {\bibfnamefont {T.-H.}\ \bibnamefont {Hsieh}},
  \bibinfo {author} {\bibfnamefont {C.-H.}\ \bibnamefont {Huang}},\ and\
  \bibinfo {author} {\bibfnamefont {T.-R.}\ \bibnamefont {Huang}},\ }\bibfield
  {title} {\bibinfo {title} {{A Note on the Zolotarev optimal rational
  approximation for the overlap Dirac operator}},\ }\href
  {https://doi.org/10.1103/PhysRevD.66.114502} {\bibfield  {journal} {\bibinfo
  {journal} {Phys. Rev. D}\ }\textbf {\bibinfo {volume} {66}},\ \bibinfo
  {pages} {114502} (\bibinfo {year} {2002})},\ \Eprint
  {https://arxiv.org/abs/hep-lat/0206007} {arXiv:hep-lat/0206007} \BibitemShut
  {NoStop}%
\bibitem [{\citenamefont {van~den Eshof}\ \emph {et~al.}(2002)\citenamefont
  {van~den Eshof}, \citenamefont {Frommer}, \citenamefont {Lippert},
  \citenamefont {Schilling},\ and\ \citenamefont {van~der
  Vorst}}]{vandenEshof:2002ms}%
  \BibitemOpen
  \bibfield  {author} {\bibinfo {author} {\bibfnamefont {J.}~\bibnamefont
  {van~den Eshof}}, \bibinfo {author} {\bibfnamefont {A.}~\bibnamefont
  {Frommer}}, \bibinfo {author} {\bibfnamefont {T.}~\bibnamefont {Lippert}},
  \bibinfo {author} {\bibfnamefont {K.}~\bibnamefont {Schilling}},\ and\
  \bibinfo {author} {\bibfnamefont {H.~A.}\ \bibnamefont {van~der Vorst}},\
  }\bibfield  {title} {\bibinfo {title} {{Numerical methods for the QCD overlap
  operator. I. Sign function and error bounds}},\ }\href
  {https://doi.org/10.1016/S0010-4655(02)00455-1} {\bibfield  {journal}
  {\bibinfo  {journal} {Comput. Phys. Commun.}\ }\textbf {\bibinfo {volume}
  {146}},\ \bibinfo {pages} {203} (\bibinfo {year} {2002})},\ \Eprint
  {https://arxiv.org/abs/hep-lat/0202025} {arXiv:hep-lat/0202025} \BibitemShut
  {NoStop}%
\bibitem [{\citenamefont {Lamm}\ \emph {et~al.}(2019)\citenamefont {Lamm},
  \citenamefont {Lawrence},\ and\ \citenamefont {Yamauchi}}]{Lamm:2019bik}%
  \BibitemOpen
  \bibfield  {author} {\bibinfo {author} {\bibfnamefont {H.}~\bibnamefont
  {Lamm}}, \bibinfo {author} {\bibfnamefont {S.}~\bibnamefont {Lawrence}},\
  and\ \bibinfo {author} {\bibfnamefont {Y.}~\bibnamefont {Yamauchi}},\
  }\bibfield  {title} {\bibinfo {title} {{General Methods for Digital Quantum
  Simulation of Gauge Theories}},\ }\href
  {https://doi.org/10.1103/PhysRevD.100.034518} {\bibfield  {journal} {\bibinfo
   {journal} {Phys. Rev. D}\ }\textbf {\bibinfo {volume} {100}},\ \bibinfo
  {pages} {034518} (\bibinfo {year} {2019})},\ \Eprint
  {https://arxiv.org/abs/1903.08807} {arXiv:1903.08807 [hep-lat]} \BibitemShut
  {NoStop}%
\bibitem [{\citenamefont {Bravyi}\ and\ \citenamefont
  {Kitaev}(2002)}]{Bravyi:2000vfj}%
  \BibitemOpen
  \bibfield  {author} {\bibinfo {author} {\bibfnamefont {S.~B.}\ \bibnamefont
  {Bravyi}}\ and\ \bibinfo {author} {\bibfnamefont {A.~Y.}\ \bibnamefont
  {Kitaev}},\ }\bibfield  {title} {\bibinfo {title} {{Fermionic Quantum
  Computation}},\ }\href {https://doi.org/10.1006/aphy.2002.6254} {\bibfield
  {journal} {\bibinfo  {journal} {Annals Phys.}\ }\textbf {\bibinfo {volume}
  {298}},\ \bibinfo {pages} {210} (\bibinfo {year} {2002})},\ \Eprint
  {https://arxiv.org/abs/quant-ph/0003137} {arXiv:quant-ph/0003137}
  \BibitemShut {NoStop}%
\bibitem [{\citenamefont {Havl{\'\i}{\v{c}}ek}\ \emph
  {et~al.}(2017)\citenamefont {Havl{\'\i}{\v{c}}ek}, \citenamefont {Troyer},\
  and\ \citenamefont {Whitfield}}]{Havlicek:2017vcq}%
  \BibitemOpen
  \bibfield  {author} {\bibinfo {author} {\bibfnamefont {V.}~\bibnamefont
  {Havl{\'\i}{\v{c}}ek}}, \bibinfo {author} {\bibfnamefont {M.}~\bibnamefont
  {Troyer}},\ and\ \bibinfo {author} {\bibfnamefont {J.~D.}\ \bibnamefont
  {Whitfield}},\ }\bibfield  {title} {\bibinfo {title} {{Operator locality in
  the quantum simulation of fermionic models}},\ }\href
  {https://doi.org/10.1103/PhysRevA.95.032332} {\bibfield  {journal} {\bibinfo
  {journal} {Phys. Rev. A}\ }\textbf {\bibinfo {volume} {95}},\ \bibinfo
  {pages} {032332} (\bibinfo {year} {2017})},\ \Eprint
  {https://arxiv.org/abs/1701.07072} {arXiv:1701.07072 [quant-ph]} \BibitemShut
  {NoStop}%
\bibitem [{\citenamefont {Grover}(2000)}]{PhysRevLett.85.1334}%
  \BibitemOpen
  \bibfield  {author} {\bibinfo {author} {\bibfnamefont {L.~K.}\ \bibnamefont
  {Grover}},\ }\bibfield  {title} {\bibinfo {title} {Synthesis of quantum
  superpositions by quantum computation},\ }\href
  {https://doi.org/10.1103/PhysRevLett.85.1334} {\bibfield  {journal} {\bibinfo
   {journal} {Phys. Rev. Lett.}\ }\textbf {\bibinfo {volume} {85}},\ \bibinfo
  {pages} {1334} (\bibinfo {year} {2000})}\BibitemShut {NoStop}%
\bibitem [{\citenamefont {H\o{}yer}(2000)}]{PhysRevA.62.052304}%
  \BibitemOpen
  \bibfield  {author} {\bibinfo {author} {\bibfnamefont {P.}~\bibnamefont
  {H\o{}yer}},\ }\bibfield  {title} {\bibinfo {title} {Arbitrary phases in
  quantum amplitude amplification},\ }\href
  {https://doi.org/10.1103/PhysRevA.62.052304} {\bibfield  {journal} {\bibinfo
  {journal} {Phys. Rev. A}\ }\textbf {\bibinfo {volume} {62}},\ \bibinfo
  {pages} {052304} (\bibinfo {year} {2000})}\BibitemShut {NoStop}%
\bibitem [{\citenamefont {Lee}\ \emph {et~al.}(2021)\citenamefont {Lee},
  \citenamefont {Berry}, \citenamefont {Gidney}, \citenamefont {Huggins},
  \citenamefont {McClean}, \citenamefont {Wiebe},\ and\ \citenamefont
  {Babbush}}]{PRXQuantum.2.030305}%
  \BibitemOpen
  \bibfield  {author} {\bibinfo {author} {\bibfnamefont {J.}~\bibnamefont
  {Lee}}, \bibinfo {author} {\bibfnamefont {D.~W.}\ \bibnamefont {Berry}},
  \bibinfo {author} {\bibfnamefont {C.}~\bibnamefont {Gidney}}, \bibinfo
  {author} {\bibfnamefont {W.~J.}\ \bibnamefont {Huggins}}, \bibinfo {author}
  {\bibfnamefont {J.~R.}\ \bibnamefont {McClean}}, \bibinfo {author}
  {\bibfnamefont {N.}~\bibnamefont {Wiebe}},\ and\ \bibinfo {author}
  {\bibfnamefont {R.}~\bibnamefont {Babbush}},\ }\bibfield  {title} {\bibinfo
  {title} {Even more efficient quantum computations of chemistry through tensor
  hypercontraction},\ }\href {https://doi.org/10.1103/PRXQuantum.2.030305}
  {\bibfield  {journal} {\bibinfo  {journal} {PRX Quantum}\ }\textbf {\bibinfo
  {volume} {2}},\ \bibinfo {pages} {030305} (\bibinfo {year}
  {2021})}\BibitemShut {NoStop}%
\end{thebibliography}%

\onecolumngrid
\begin{center}
  \larger[2]\textsc{Supplementary Material}
\end{center}
\vspace{3em}
\twocolumngrid

\section{Block encoding the quadratic fermionic Hamiltonian}

In this section we will construct the block-encoding for a generic quadratic fermionic Hamiltonian, given an oracle that can implement the single-particle Hamiltonian $\mathbf{h}$. In general, we want to find the unitary $W$ such that
\begin{equation}
\label{eq:supp_quadratic_H}
    \langle 0^{\otimes b}\lvert W \rvert 0^{\otimes b}\rangle = \frac{H}{\Lambda}\;,
\end{equation}
where $H$ is defined in \cref{eq:snd_comp}. For ease of notation we will use $\mathbf{h}_{ab}=h_{ab} U_{ab}$ to denote the components of the single particle Hamiltonian. For the general second quantized Hamiltonian $H$ to be hermitian we will need
\begin{equation}
\label{eq:herm_cond}
h_{ab} U_{ab}=\mathbf{h}_{ab}=\mathbf{h}_{ba}^\dagger=h_{ba}^* U_{ba}^\dagger\;.
\end{equation}

Our construction starts by promoting $\mathbf{h}_{ab}$ from an operator acting only on $\mathcal{H}_G$ to an operator
\begin{equation}
\hat{\mathbf{h}}_{ab} =(h_{ab}\rvert a\rangle\langle b\lvert)\otimes U_{ab}\equiv \hat{h}_{ab}\otimes U_{ab}\;,
\end{equation}
acting also on register with $q\geq\lceil\log_2(\mathcal{Q})\rceil$ qubits. This makes it clear that we are promoting $h_{ab}$ from a $\mathcal{Q}\times\mathcal{Q}$ matrix of numbers to an operator $\hat{h}_{ab}$ acting on register with $q$ qubits. The full single particle Hamiltonian in this space then reads
\begin{equation}
\label{eq:hbf_def_supp}
\hat{\mathbf{h}}=\sum_{a,b=0}^{\mathcal{Q}-1} \hat{h}_{ab}\otimes U_{ab}=\sum_{a,b=0}^{\mathcal{Q}-1} h_{ab}\rvert a\rangle\langle b\lvert\otimes U_{ab}\;.
\end{equation}
From \cref{eq:herm_cond}, we see that $\hat{\mathbf{h}}$ is hermitian. To fix conventions, we take the size of the auxiliary qubit register as $q=n_I+n_S+n_X$ where
\begin{equation}
\begin{split}
\label{eq:qdef}
n_I &= \left\lceil\log_2(\mathcal{N}_I)\right\rceil, \; n_S = \frac{d+1}{2},\;
n_X = d\left\lceil\log_2(N)\right\rceil.
\end{split}
\end{equation}
With this convention the first $n_I$ qubits encode internal quantum numbers, the next $n_S$ the spin and the last $n_X$ the spatial coordinates. For practical purposes it is convenient to choose the number of sites per spatial dimension $N=2^n$ so that $n_X=dn$ but our construction can be extended to the case where $N$ is not a power of $2$.

A state of the auxiliary register will then have the following decomposition
\begin{equation}
    \lvert a\rangle = \lvert \cdot \rangle_{I} \lvert \cdot\rangle_S \lvert \cdot\rangle_{x_1} \cdots \lvert \cdot\rangle_{x_d}\;,
    \label{eq:aux_reg_convention_supp}
\end{equation}
with $\lvert \cdot\rangle_{x_k}$ being the register encoding the spatial coordinate in the $k$-th direction.

\begin{theorem}[Encoding of Fermionic Hamiltonian]
    Let $\hat{\mathbf{h}}$ as in \cref{eq:hbf_def_supp}. It is possible to implement two unitary operators $\mathcal{U}_L$ and $\mathcal{U}_R$ such that 
    \begin{equation}
    \langle0^{\otimes(q+2)}\lvert \mathcal{U}_L\hat{\mathbf{h}}\, \mathcal{U}_R\rvert0^{\otimes(q+2)}\rangle=\frac{H}{2^q}
    \end{equation}
    with $q=n_I+n_S+n_X\geq\log_2(\mathcal{Q})$ as in \cref{eq:qdef} with $O(\mathcal{Q})$ gates and $O(\log(\mathcal{Q}))$ auxiliary qubits.
\end{theorem}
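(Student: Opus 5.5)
We build $\mathcal{U}_R$ and $\mathcal{U}_L$ so that, sandwiched between them, the auxiliary $q$-qubit register is traced over in the uniform state. The fermionic operator $\Psi_b$ is then tied to the computational basis state $\lvert b\rangle$, and $\Psi^\dagger_a$ to $\langle a\rvert$. The target identity is
\begin{equation}
\frac{H}{2^q}=\frac{1}{2^q}\sum_{a,b}\Psi^\dagger_a\,\big(\langle a\rvert\hat{\mathbf{h}}\lvert b\rangle\big)\,\Psi_b ,
\end{equation}
since $\langle a\rvert\hat{\mathbf{h}}\lvert b\rangle=h_{ab}U_{ab}$ by \cref{eq:hbf_def_supp}. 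The plan has three steps.

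First, prepare the uniform superposition on the $q$-qubit register with Hadamards: $H^{\otimes q}\lvert0^{\otimes q}\rangle=2^{-q/2}\sum_b\lvert b\rangle$.

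Second, conditioned on the register value $b$, apply the fermionic annihilator $\Psi_b$. Since $\Psi_b$ is not unitary, we block-encode it with one extra ancilla. Under Jordan--Wigner, $\Psi_b=\big(\prod_{c<b}Z_c\big)\tfrac12(X_b+iY_b)$. We realise the selection $\sum_b\lvert b\rangle\langle b\rvert\otimes\Psi_b$ as an LCU over the Majoranas $\tfrac12(\gamma_{2b}+i\gamma_{2b+1})$, using one ancilla qubit $r$ prepared with a Hadamard and a phase. Postselecting $r$ on $\lvert0\rangle$ then yields $\Psi_b$ with no normalization loss beyond the factor already absorbed; the Majoranas are unitary, so $\lVert\Psi_b\rVert\le1$ fits.

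The controlled Jordan--Wigner strings $\sum_b\lvert b\rangle\langle b\rvert\otimes\prod_{c<b}Z_c\,X_b$ are implemented by unary iteration \cite{PhysRevX.8.041015}. This uses $O(\mathcal{Q})$ T/Toffoli gates and $O(\log\mathcal{Q})$ ancillas. The Z-string is accumulated incrementally along the iteration, so the total cost stays linear rather than $O(\mathcal{Q}^2)$. This defines $\mathcal{U}_R$.

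Third, $\mathcal{U}_L$ is the mirror image: apply controlled-$\Psi_a^\dagger$ via the same unary iteration with its own ancilla qubit, then $H^{\otimes q}$. Projecting onto $\langle 0^{\otimes q}\rvert$ of the register gives the second factor $2^{-q/2}\sum_a\langle a\rvert$. The two LCU ancillas plus the register account for the $q+2$ qubits in the statement.

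Multiplying out,
\begin{equation}
\langle0\rvert\mathcal{U}_L\hat{\mathbf{h}}\,\mathcal{U}_R\lvert0\rangle=2^{-q}\sum_{a,b}\Psi_a^\dagger\,\langle a\rvert\hat{\mathbf{h}}\lvert b\rangle\,\Psi_b=H/2^q .
\end{equation}
Any register states with $a,b\ge\mathcal{Q}$ (when $2^q>\mathcal{Q}$) have vanishing matrix elements of $\hat{\mathbf{h}}$, so they drop out.

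The main obstacle is ordering and signs. $\hat{\mathbf{h}}$ acts on the register and $\mathcal{H}_G$ and commutes with the Fock operators, but the fermionic operators must be placed consistently: $\Psi_b$ must act before $\hat{\mathbf{h}}$ is applied (or commute through it, which it does since they act on disjoint spaces), and $\Psi_a^\dagger$ after. We also need the $i$ phases of the Majorana LCU to combine to exactly $\Psi_a^\dagger$ and $\Psi_b$, not their adjoints; this fixes the phase gate on the ancillas of $\mathcal{U}_L$ versus $\mathcal{U}_R$ as conjugates. Finally, we verify that the unary-iteration Z-string accumulation gives $O(\mathcal{Q})$ rather than $O(\mathcal{Q}^2)$ gates, citing the accumulator construction of \cite{PhysRevX.8.041015}.
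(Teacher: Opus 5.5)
Your proposal is correct and is essentially the paper's construction: Hadamards on the $q$-qubit index register, unary-iterated Jordan--Wigner Majorana strings with an accumulator to keep the cost at $O(\mathcal{Q})$, and two extra qubits that select among the four products $\mathcal{P}_aX_a\mathcal{P}_bX_b$, $\mathcal{P}_aY_a\mathcal{P}_bY_b$, $i\mathcal{P}_aX_a\mathcal{P}_bY_b$, $-i\mathcal{P}_aY_a\mathcal{P}_bX_b$, which sum to $4\Psi_a^\dagger\Psi_b$. The only difference is bookkeeping: the paper labels these four terms with a shared two-qubit register $A$ that controls both $\mathcal{V}_L$ and $\mathcal{V}_R$, whereas you obtain the same four terms by factoring them into independent one-qubit LCUs for $\Psi_a^\dagger$ and $\Psi_b$, which gives the same normalization $2^{-q}$ and the same $q+2$ projected qubits.
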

\begin{proof}
    Consider the unitaries $\mathcal{V}_{L/R}$ acting on a $2$-qubit register $A$, the $q$-qubit register $B$ used to promote $\mathbf{h}$ to $\hat{\mathbf{h}}$, and the $\mathcal{Q}$-qubit register encoding the fermionic Fock space as follows
    \begin{equation}
    \label{eq:ulr_def}
        \mathcal{V}_{L/R} = \sum_{i=0}^{3}\sum_{a=0}^{\mathcal{Q}-1} \lvert i\rangle \langle i \rvert_A \lvert a\rangle \langle a \rvert_B \otimes V_{i,a}^{L/R}\;,
    \end{equation}
    with unitaries $V_{i,a}^{L/R}$ acting on the last $\mathcal{Q}$-qubit register. By denoting with $\rvert\phi_+\rangle=\rvert+^{\otimes(q+2)}\rangle$ the state where every qubit in the $A$ and $B$ register is in the $\lvert + \rangle$ state, we get:
    \begin{equation}
        \langle \phi_+ \rvert \mathcal{V}_L \hat{\mathbf{h}}\, \mathcal{V}_R \lvert \phi_+ \rangle = \frac{1}{2^{q+2}} \sum_{i,j=0}^{3} \sum_{a,b=0}^{\mathcal{Q}-1}  \hat{\mathbf{h}}_{a,b} V_{i,a}^LV^R_{j,b}\;.
    \end{equation}
    With the Jordan-Wigner mapping, the fermionic operators are represented as $\Psi_a = \mathcal{P}_a(X_a+iY_a)/2$ where $X_a$ and $Y_a$ are the corresponding Pauli matrices acting on qubit $a$ and we introduced a diagonal operator $\mathcal{P}_a=\prod_{c=0}^{a-1}Z_c$ where we chose a fixed ordering of the fermionic modes.
    
    If we consider the following choice:
    \begin{equation}
    \begin{split}
        V_{0,a}^L &= V_{0,a}^R = \mathcal{P}_a X_a \quad\quad\quad V_{1,a}^L = V_{1,a}^R = \mathcal{P}_a Y_a \\
        V_{2,a}^L &= V_{3,a}^R = \mathcal{P}_a X_a \quad\quad\quad V_{3,a}^L =-V_{2,a}^R = -i\mathcal{P}_a Y_a \\
    \end{split}
    \end{equation}
    then we have that the sums over $i$ and $j$ labels gives
    \begin{equation}
    \begin{split}
    \sum_{i,j=0}^{3}V_{i,a}^LV^R_{j,b}=& \mathcal{P}_a\mathcal{P}_b(X_aX_b+Y_aY_b)\\
    &+i\mathcal{P}_a\mathcal{P}_b(X_aY_b-Y_aX_b)=4\Psi^\dagger_a \Psi_b\;.
    \end{split}
    \end{equation}
    But then, if we define the unitaries
    \begin{equation}
    \mathcal{U}_L = H^{\otimes (q+2)}\mathcal{V}_L\quad\text{and}\quad\mathcal{U}_R = \mathcal{V}_RH^{\otimes (q+2)}\;,
    \end{equation}
    where $H$ is the Hadamard gate, we finally find
    \begin{equation}
         \langle 0^{\otimes(q+2)} \rvert \mathcal{U}_L \hat{\mathbf{h}}\, \mathcal{U}_R \lvert 0^{\otimes(q+2)} \rangle = \frac{H}{2^q}\;.
    \end{equation}
    Implementing the unitaries $\mathcal{V}_L$ and $\mathcal{V}_R$ as in \cref{eq:ulr_def} can be performed using a unary iteration scheme~\cite{PhysRevX.8.041015} with $O(\log(\mathcal{Q}))$ ancilla qubits, $O(\mathcal{Q})$ $T$ gates and the $4\mathcal{Q}$ unitaries $V_{i,a}^{L/R}$ controlled by a single qubit. A naive implementation of the latter will require $O(\mathcal{Q}^2)$ Clifford operations but adding an accumulator qubit to the unary iterator this can be brought down to only $O(\mathcal{Q})$ instead~\cite{PhysRevX.8.041015}.

\end{proof}

Considering again \cref{eq:supp_quadratic_H}, we proved that it is possible to implement the block-encoding unitary $W=\mathcal{U}_L \hat{\mathbf{h}}\, \mathcal{U}_R$, with the subnormalization $\Lambda = 2^q$, and the size of the auxiliary register $b$ is exactly $b=q$.

In situations where $2^q>\mathcal{Q}$ it is possible to improve the normalization of the block encoded Hamiltonian from $2^q$ to $\mathcal{Q}$ by replacing the state $\lvert +^{\otimes(q+2)} \rangle$ with the state $\frac{1}{\sqrt{4\mathcal{Q}}}\lvert ++ \rangle_A \sum_{a=0}^{\mathcal{Q}-1} \lvert a\rangle_B$. This can be prepared with $O(\log(\mathcal{Q}/\epsilon))$ gates at the price of introducing an error $\epsilon$~\cite{PhysRevLett.85.1334,PhysRevA.62.052304,PhysRevX.8.041015,PRXQuantum.2.030305}. In this work we consider the simpler case where we keep the block encoding exact at the price of a possible factor of $\approx2$ in the normalization.

\section{GW violation}

In this section, a bound on the Ginsparg--Wilson (GW)
relation violation is derived, when considering an approximated overlap single-particle Hamiltonian.

Let $V = \gamma^0\varepsilon(h_w)$, with $\varepsilon(h_w) = h_w/|h_w|$ the exact sign operator, and let $\hat{V} = \gamma^0 E_M$, with $E_M$ the QSP approximation of the sign function. Those operators satisfy \cref{eq:BE_polynomial_approx_sign}, that we write here as
\begin{equation}
    \|\hat{V} - V\| \leq \epsilon_e\;.
\end{equation}

We define the approximated chirality operator for overlap fermions, as well as the approximated single-particle Hamiltonian (from \cref{eq:hov,eq:modified_gamma_5}) as:
\begin{equation}
  \hat{\gamma}_5 = \tfrac{1}{2}\gamma_5\!\left(1 - \hat{V}\right),
  \qquad
  \hat{h} = \gamma^0\!\left(1 + \hat{V}\right).
  \label{eq:defs}
\end{equation}
Let us denote with $\Gamma_i = \gamma_0\gamma_i$, $\Gamma_0 = \gamma_0$ and $\Gamma_5 = \gamma_0\gamma_5$, so that $\Gamma_5\Gamma_i\Gamma_5^\dagger = -\Gamma_i$ and $\Gamma_5\Gamma_0\Gamma_5^\dagger = -\Gamma_0$. To prove this, it is enough to use the anti-commutation of gamma matrices, and the fact that $\gamma_0^2 = \mathbb{1}$. Moreover, recall the Wilson-Dirac single-particle Hamiltonian as
\begin{equation}
    \hw = i \sum_{k=1}^d \Gamma^k \delta_k + \Gamma_0 \left(m-r\frac{\Delta}{2}\right)\;.
\end{equation}
Since every term of $\hw$ contains either $\Gamma_i$ or $\Gamma_0$ but not $\Gamma_5$, it is easy to see that 
\begin{equation}
    \Gamma_5 \hw \Gamma_5^\dagger = -\hw\;.
    \label{eq:hw_conjugation_Gamma5}
\end{equation}

Then, we will use the following identities:
\begin{enumerate}
    \item $\gamma_0 V \gamma_0 = V^\dagger$, $\gamma_0 \hat{V} \gamma_0 = \hat{V}^\dagger$
    \item $\gamma_5 V \gamma_5 = V^\dagger$, $\gamma_5 \hat{V} \gamma_5 = \hat{V}^\dagger$
\end{enumerate}
We can show the conjugation relations by $\gamma_0$ as
\begin{equation}
    \gamma_0 V \gamma_0 = \gamma_0 \gamma_0 \varepsilon(\hw) \gamma_0 = \varepsilon(\hw) \gamma_0 = V^\dagger
\end{equation}
and the same is true for the approximated $\hat{V} = \gamma_0 E_M$, knowing that $E_M$ is hermitian. This is true because $E_M = \mathcal{P}_M(\hwopG/\alpha\Lambda_G)$ with $\mathcal{P}_M$ a real polynomial of definite parity, and a real polynomial of a Hermitian operator is Hermitian. As for the relations with the conjugation by $\gamma_5$, we can show it as:
\begin{equation}
\begin{split}
    \gamma_5 V^\dagger \gamma_5 &= \gamma_5 \varepsilon(\hw)\gamma_0 \gamma_5 \\
    &= \gamma_0 \gamma_0 \gamma_5 \varepsilon(\hw)\gamma_0 \gamma_5 \\
    &= \gamma_0 \Gamma_5 \varepsilon(\hw) \Gamma_5 \\
    &= -\gamma_0 \Gamma_5 \varepsilon(\hw) \Gamma_5^\dagger \\
    &= -\gamma_0 \varepsilon( \Gamma_5\hw\Gamma_5^\dagger) \\
    &= - \gamma_0 \varepsilon(-\hw) \\
    &= \gamma_0\varepsilon(\hw) = V
\end{split}
\end{equation}
where we used \cref{eq:hw_conjugation_Gamma5}, and the fact that for every unitary $U$, and function $f(x)$ that admits a taylor expansion, we have that
\begin{equation}
    Uf(x)U^\dagger = f(UxU^\dagger)\;.
\end{equation}
The same proof can be carried out for the approximated $\hat{V}$ in the same way, considering $\hat{V}=\gamma_0 E_M$, where $E_M=E_M(\hw)$ is a polynomial (smooth function) of $\hw$.

Now that we have all the identities needed, we can go to the actual bound of the \ac{GW} relation violation. Using Identity~2 we have $\gamma_5(1-\hat{V}) = (1-\hat{V}^\dagger)\gamma_5$, and
we compute the two orderings of $\hat{\gamma}_5\hat{h}$:
\begin{align}
  \hat{\gamma}_5\,\hat{h}
    &= \tfrac{1}{2}(1-\hat{V}^\dagger)\gamma_5\gamma^0(1+\hat{V})
    \notag \\
    &= -\tfrac{1}{2}(1-\hat{V}^\dagger)\gamma^0(1+\hat{V}^\dagger)\gamma_5,
  \label{eq:lhs}
\end{align}
Similarly,
\begin{equation}
  \hat{h}\,\hat{\gamma}_5
  = \tfrac{1}{2}\,\gamma^0(1+\hat{V})(1-\hat{V}^\dagger)\gamma_5.
  \label{eq:rhs}
\end{equation}
The commutator is therefore
\begin{equation}
  [\hat{\gamma}_5,\,\hat{h}]
  = -\tfrac{1}{2}
    \Bigl[
      (1-\hat{V}^\dagger)\gamma^0(1+\hat{V}^\dagger)
      +
      \gamma^0(1+\hat{V})(1-\hat{V}^\dagger)
    \Bigr]\gamma_5.
  \label{eq:comm}
\end{equation}
We expand each bracket using Identity~1 to eliminate $\hat{V}^\dagger\gamma^0$:
\begin{align}
  (1-\hat{V}^\dagger)\gamma^0(1+\hat{V}^\dagger)
    &= \gamma^0(\hat{V}^\dagger-\hat{V}) + \gamma^0(1-\hat{V}\hat{V}^\dagger), \\
  \gamma^0(1+\hat{V})(1-\hat{V}^\dagger)
    &= \gamma^0(\hat{V}-\hat{V}^\dagger) + \gamma^0(1-\hat{V}\hat{V}^\dagger).
\end{align}
Summing these two expressions, the skew-Hermitian parts $\gamma^0(\hat{V}^\dagger - \hat{V})$
cancel, giving $2\gamma^0(1 - \hat{V}\hat{V}^\dagger)$. Substituting back into~\eqref{eq:comm}:
\begin{equation}
  [\hat{\gamma}_5,\,\hat{h}] = -\gamma^0(1-\hat{V}\hat{V}^\dagger)\,\gamma_5.
  \label{eq:comm_final}
\end{equation}
In the exact case $\hat{V} = V$, unitarity gives $VV^\dagger = 1$ and the commutator
vanishes, recovering the exact chiral symmetry. However, since $\hat{V} = \gamma^0 E_M$, we have $\hat{V}\hat{V}^\dagger = \gamma^0 E_M^2\gamma^0$. We can bound how much $E_M$ is far from being a reflection by using \cref{eq:BE_polynomial_approx_sign}:
\begin{equation}
\begin{split}
  \|E_M^2 - \mathbb{1} \| &= \|(E_M - \varepsilon(h_w))(E_M + \varepsilon(h_w))\| \\
  &\leq \|E_M - \varepsilon(h_w)\|\,\|E_M + \varepsilon(h_w)\| \\
  &\leq \epsilon_e \cdot 2 = 2\epsilon_e,
\end{split}
\end{equation}
where we used the fact that $\varepsilon(\hw)^2 = \mathbb{1}$, and that, since $E_M$ is approximating the sign function via an error function, $\|E_M\|\le 1$. Therefore, we can write
\begin{equation}
\begin{split}
    \|\mathbb{1} - \hat{V}\hat{V}^\dagger\| &= \|\mathbb{1} - \gamma_0 E_M^2 \gamma_0 \| \\
    &= \|\gamma_0\gamma_0 - \gamma_0 E_M^2 \gamma_0 \| \\
    &= \|\gamma_0\left( \mathbb{1} - E_M^2 \right) \gamma_0 \| \\
    &\leq \|\gamma_0\| \cdot \| \mathbb{1} - E_M^2 \|\cdot \| \gamma_0 \| \\
    &\leq 1\cdot 2\epsilon_e \cdot 1 = 2\epsilon_e
\end{split}
\end{equation}

This means that the bound to the commutation relation is:
\begin{equation}
    \|[\hat{\gamma}_5,\,\hat{h}]\|
  = \|1 - \hat{V}\hat{V}^\dagger\|
  \leq 2\epsilon_e,
\end{equation}
which is \cref{eq:GW_rel_violation} of the main text, and one can see that the chiral-symmetry violation is controlled directly by the \ac{QSP} approximation error $\epsilon_e$.

\end{document}